\numberwithin{equation}{section}
\newtheorem{theorem}{Theorem}[section]
\newtheorem{lemma}[theorem]{Lemma}
\newtheorem{proposition}[theorem]{Proposition}
\def\eq#1 { \begin{equation} #1 \end{equation} }
\def\eqn#1{ \begin{eqnarray} #1 \end{eqnarray} }
\def\nn { \nonumber }
\def\half{\frac{1}{2}}
\def\a{{\alpha}}
\def\b{{\beta}}
\def\l{{\lambda}}
\def\s{{\sigma}}
\def\D{\Delta}
\def\bg{{\overline{g}}}
\def\bmu{{\overline{\mu}}}
\def\bnu{{\overline{\nu}}}
\def\bx{{\overline{x}}}
\def\bb{{\overline{\beta}}}
\def\ba{{\overline{\alpha}}}
\def\bBox{{\overline{\Box}}}
\def\bX{\overline{X}}
\def\etab{{\overline{\eta}}}
\def\bg{{\overline{\gamma}}}
\def\bd{{\overline{\delta}}}
\def\cA{\mathcal{A}}
\def\cD{\mathcal{D}}
\def\cO{\mathcal{O}}
\def\cH{\mathcal{H}}
\def\vx{{\vec{x}}}
\def\vj{{\vec{j}}}
\def\vL{{\vec{L}}}
\def\vM{{\vec{M}}}
\def\Reals{\mathbb{R}}
\def\Complex{\mathbb{C}}
\def\testT{\mathscr{T}_{\rm T}}
\def\testTT{\mathscr{T}_{\rm TT}}
\def\grad{{\rm grad}}
\def\met{{\rm metric}}
\def\Re{{\rm Re}\,}
\def\ph{\phantom}
\def\P{\text{P}}
\def\Lie{\mathscr{L}}
\def\d{\partial}
\def\vol{{\rm vol}(S^D)}
\def\scrI{\mathscr{I}}
\def\WF{{\rm WF}}
\def\WFA{{\rm WF_A}}
\def\2F1#1#2#3#4{\,\phantom{}_2F_1\left[#1\,,\,#2\,;\,#3\,;\,#4\right] }
\def\C#1{\left\langle #1 \right\rangle}
\def\CO#1{\left\langle #1 \right\rangle_\Omega}
\def\ket#1{\left| #1 \right\rangle}
\def\bra#1{\left\langle #1 \right|}
\def\T2#1#2#3#4{(\nabla_{(#1}\nabla^{(#3}Z)(\nabla_{#2)}\nabla^{#4)}Z)}
\def\GG#1#2#3#4{g_{(#1}^{\ph{#1}(#3} g_{#2)}^{\ph{#2)}#4)}}
\author{
  Ian A. Morrison${{}^1}$\thanks{\href{mailto:i.morrison@damtp.cam.ac.uk}
    {i.morrison@damtp.cam.ac.uk}} \\ \\
  {\it ${}^1$DAMTP, Centre for Mathematical Sciences,}\\
   {\it University of Cambridge, U.K.}
}
\date{\today}
\def\Title#1{\title{\vspace{-0.65in}#1}}
\begin{document}

\maketitle

\begin{abstract}
  We quantize linearized Einstein-Hilbert gravity on de Sitter 
  backgrounds in a covariant gauge. We verify the existence of a 
  maximally-symmetric (i.e.~de Sitter-invariant) Hadamard state 
  $\Omega$ for all globally hyperbolic de Sitter backgrounds in all 
  spacetime dimensions $D \ge 4$ by constructing the state's 2-point 
  function in closed form. 
  This 2-pt function is explicitly maximally symmetric.
  We prove an analogue of the Reeh-Schlieder theorem for linearized
  gravity.
  Using these results we prove a cosmic no-hair theorem for linearized 
  gravitons: for any state in the Hilbert space constructed from $\Omega$, 
  the late-time behavior of local observable correlation functions reduces 
  to those of $\Omega$ at an exponential rate with respect to proper time.
  We also provide the explicitly maximally-symmetric graviton 2-pt 
  functions in a class of generalized de Donder gauges 
  suitable for use in non-linear perturbation theory.
  Along the way we clarify a few technical aspects
  which led previous authors to conclude that these 2-pt functions 
  do not exist.
\end{abstract}

\newpage

\tableofcontents

\section{Introduction and summary}
\label{sec:intro}

Perturbative quantum gravity on de Sitter (dS) backgrounds is of
considerable interest. de Sitter spacetime is the maximally-symmetric 
model of an expanding cosmology and may be used to approximate
our universe both in the inflationary era as well as our current 
epoch of accelerated expansion. 
At the level of phenomenology, perturbative gravitons on de Sitter 
may be used to study tensor fluctuations in the cosmic microwave 
background (CMB) \cite{Weinberg:2008zzc}.
Of more theoretical interest, previous studies have led many 
authors to conjecture interesting infrared (IR) effects associated to de
Sitter gravitons. 
It has been argued that
even in perturbation theory one may find
large quantum (i.e., loop) IR corrections
\cite{Tsamis:1994ca,Giddings:2010nc,Giddings:2011zd,Giddings:2011ze,Kitamoto:2012ep}, 
the dynamical screening of the effective cosmological constant
\cite{Tsamis:1996qm,Tsamis:1996qq,Mottola:2010gp,Tsamis:2011uq},
and other destabilizing phenomena 
\cite{Mottola:1985qt,Mazur:1986et,Floratos:1987aa}.
An interesting feature of these effects is that they appear to
break de Sitter symmetry. This has caused some controversy as
the precise mechanism by which de Sitter symmetry is broken has 
not been adequately explained. To this day both the existence and role
of a maximally-symmetric (a.k.a.~de Sitter-invariant) graviton state
is heatedly debated.

It is a general expectation of quantum field theory (QFT) in 
curved spacetime \cite{Hollands:2008vx} that when the background
spacetime admits isometries the associated isometry group should
play a crucial role in organizing the theory, much like how the Poincar\'e
group organizes QFT in Minkowski spacetime.
In the context of a de Sitter background, this expectation is realized at
least for QFTs with a mass gap. For instance, it is known that 
interacting massive scalar QFTs enjoy a unique maximally-symmetric state
\cite{Marolf:2010nz,Hollands:2010pr,Higuchi:2010aa,Korai:2012fi}.
This state may be interpreted as the interacting ``Bunch-Davies vacuum''
in the cosmological chart or as the thermal state at the 
de Sitter temperature in the static chart.
Most important is the role this maximally-symmetric state plays in the 
``cosmic no-hair theorem'' of \cite{Marolf:2010nz,Hollands:2010pr}
which states that
the expectation values of local observables in a \emph{generic} state
limit to those of the maximally-symmetric state in the asymptotic regions. 
This makes the maximally-symmetric state the most 
relevant state when studying the asymptotic behavior 
of expectation values of the theory.

It is of great interest to determine under what circumstances
a similar cosmic no-hair theorem exists for perturbative quantum gravity.
In classical gravity, asymptotically de Sitter 
spaces are known to be non-linearly stable 
under an appropriate class of perturbations 
\cite{Friedrich:1986aa,Anderson:2004ir}.
Moreover, a no-hair theorem of Wald \cite{Wald:1983ky} states that
initially expanding homogeneous solutions to vacuum Einstein's equations with 
positive cosmological constant 
exponentially evolve toward locally de Sitter spaces.
These classical results suggest that at least at tree-level quantum 
gravity should admit a cosmic no-hair theorem very similar to that of 
the dS QFTs described above. In particular, these results imply
the existence of a maximally-symmetric graviton state which acts 
as the attractor state for local observables at late 
times. This is further supported by the recent non-perturbative semi-classical
analysis of \cite{Frob:2013ht}.

The goal of this paper is to make precise the cosmic no-hair theorem
for \emph{linearized} quantum gravity on de Sitter.
We quantize linearized Einstein-Hilbert gravity in $D \ge 4$
spacetime dimensions on a de Sitter background.
We employ an algebraic approach to quantization which allows us to 
quantize the theory in covariant gauges, greatly 
facilitate our computations, and make our results transparent. 
The algebraic approach also allows us to simultaneously discuss 
the quantization on various de Sitter charts which may be be taken to 
define globally hyperbolic spacetimes in their own right.
Excepting the Appendix all computations are performed in Lorentz 
signature, though in reality most of our manipulations are insensitive 
to the metric signature.
We have four main results which we summarize now:
\begin{enumerate}[R1.]
  \item \label{res:TT}
    We verify the existence of a maximally-symmetric state $\Omega$
    by constructing the graviton 2-pt correlation function 
    of this state explicitly in two classes of gauges.
    This 2-pt function may be written in closed form in terms of 
    maximally-symmetric bi-tensors and is thus manifestly maximally symmetric.
    We verify that the state $\Omega$ is Hadamard and satisfies the 
    positivity (a.k.a.~``unitarity'') condition.
  \item \label{res:RS}
    We prove an analogue of the Reeh-Schlieder theorem for linearized
    gravity (Theorem~\ref{thm:RS}). 
    The Reeh-Schlieder theorem states that the set of states
    generated from $\Omega$ by the algebra of local observables of a 
    contractible region is dense on the Hilbert space $\cH_\Omega$
    constructed from $\Omega$ via the GNS construction.
    The Reeh-Schlieder property is a remarkable attribute of 
    \emph{local} quantum field theories and in the context of
    gravity is clearly special to the linear theory.
    Nevertheless, this theorem allows us to prove
    the next result with some rigour.
  \item \label{res:hair}
    We prove a cosmic no-hair theorem for linearized gravity
    (Theorem~\ref{thm:hair}) which states the following:
    let $\Psi$ be a state on the Hilbert space $\cH_\Omega$ and let 
    $A$ be an observable whose support
    is compact and contractible. At sufficiently late times the 
    expectation value $\C{A(\tau)}_\Psi$ approaches $\CO{A(\tau)}$ rapidly:
    \eq{
      \left|\C{ A(\tau) }_\Psi - \CO{ A(\tau) } \right| 
      < c\, e^{-2 \tau} .
    }
    Here $c$ is a finite, non-negative constant and $\tau$ is 
    the proper time separation from any reference point in the past.
    This result follows rather simply from the fact that the 
    expectation values of local observables in the state
    $\Omega$ obey cluster decomposition at large timelike and achronal 
    separations. 
\end{enumerate}
This no-hair theorem may be regarded as a quantum formulation, for 
linearized perturbations, of the classical stability theorems of 
Freidrich \cite{Friedrich:1986aa} and Anderson \cite{Anderson:2004ir}.
We emphasize that this result holds in linearized quantum gravity,
i.e.~we do not consider self-interactions or coupling to matter fields.
As such the no-hair theorem does not directly constrain the remarkable 
effects mentioned in the first paragraph, though we believe it provides
a valuable perspective. Our results \emph{do} show that ``de Sitter breaking'' 
\cite{Woodard:2004ut,Miao:2010vs} is not a phenomenon of linearized 
quantum gravity.

Our last result, while still a result of linearized quantum gravity,
will be useful mostly in the context of non-linear perturbation theory:
\begin{enumerate}[R1.]\setcounter{enumi}{3}
  \item\label{res:dD}
    We compute the graviton 2-pt function of $\Omega$ in the
    one-parameter class of generalized de Donder gauges which satisfy
    the gauge condition
    \eq{
      \nabla^\nu h_{\mu\nu}(x) 
      - \frac{\beta}{2} \nabla_\mu h^\nu_{\ph{\nu}\nu}(x) = 0 ,
      \quad \beta \in \Reals .
    }
    This class of gauges is of interest because the gauge condition
    is generally covariant and may be imposed in non-linear 
    perturbation theory (unlike transverse traceless gauge).
    We obtain a manifestly maximally-symmetric expression for the
    2-pt function for all but a discrete set of values for the gauge
    parameter $\beta$. For the case $D=4$ our result agrees with the 
    expression that may be obtained from a limit of the 
    ``covariant gauge'' 2-pt function of 
    \cite{Higuchi:2001uv,Higuchi:2000ge,Faizal:2011iv}.
\end{enumerate}    
These de Donder 2-pt functions are not used to obtain
(\href{res:RS}{R2}) or (\href{res:hair}{R3}).

There is a large literature debating the existence of a manifestly 
maximally-symmetric graviton 2-pt function on de Sitter backgrounds -- see e.g.
\cite{Higuchi:1991tk,Kleppe:1993fz,Woodard:2004ut,Miao:2010vs,Urakawa:2010it,Higuchi:2011aa,Faizal:2011iv} and references therein.
Our results (\href{res:TT}{R1}) and (\href{res:dD}{R4}) settle
this debate at least within the context of our quantization scheme.
We note in particular two points of contact with the existing literature.
First, although our 2-pt functions are computed in Lorentz signature
they agree with the analytic continuation of Euclidean 2-pt functions 
constructed on the Euclidean sphere $S^D$. 
Several previous authors have utilized this technique
\cite{Allen:1986dd,Floratos:1987aa,Higuchi:2000ge,Higuchi:2001uv,Park:2008ki},
but its validity has been debated \cite{Miao:2009hb,Miao:2010vs}. 
Since the health of our results has been verified in Lorentz 
signature, we conclude that at least 
for the cases we consider no pathologies arise from this analytic
continuation process.

Second, our results (\href{res:TT}{R1}) and (\href{res:dD}{R4}) 
are technically in conflict with the claims of 
\cite{Miao:2011fc} and \cite{Mora:2012zi}: these works claim that
there do not exist maximally-symmetric solutions to the
graviton 2-pt Schwinger-Dyson equations in transverse traceless
or generalized de Donder gauges.
This conflict stems from the fact that these gauge conditions do not
completely fix the gauge freedom. Refs.~\cite{Miao:2011fc,Mora:2012zi} 
do not explore the full space of solutions consistent with these 
partial gauge conditions; instead, implicit in their analysis is an
additional boundary condition
(imposed at spacelike infinity in the Poincar\'e chart)
which is incompatible with the maximally symmetric solutions.
We discuss this in further detail in \S\ref{sec:compare}.
This conflict aside, we emphasize that the
\emph{state} defined by the less-symmetric 2-pt functions of 
\cite{Miao:2011fc,Mora:2012zi}
is equivalent to the maximally-symmetric state $\Omega$ as probed
by all local observables. For certain classes of observables this
fact has been pointed out before
\cite{Urakawa:2010it,Higuchi:2011aa,Faizal:2011iv,Higuchi:2012vy}.

The remainder of this paper is organized as follows. We begin in 
\S\ref{sec:prelims} by introducing preliminary material needed for our study.
This includes brief reviews of de Sitter spacetime (\S\ref{sec:dS}) and 
classical linearized gravity (\S\ref{sec:gravity}). Our quantization
scheme is described in detail in \S\ref{sec:quantum}.
In \S\ref{sec:2pt} we construct the 2-pt function of $\Omega$ in 
transverse traceless gauge. This construction is straight-forward 
but utilizes a great amount of simple technology which takes some
time to describe. The 2-pt function is finally
computed in \S\ref{sec:TT}. In \S\ref{sec:compare} we compare our 
findings with previous results in the literature.
The two theorems (\href{res:RS}{R2}) and (\href{res:hair}{R3}) are presented 
in \S\ref{sec:consequences}.
Finally, in \S\ref{sec:deDonder} we compute the 2-pt function of $\Omega$
in generalized de Donder gauge. We once again compare our results
to those in the literature in \S\ref{sec:compare2}. \\

\noindent{\it Note Added.}\hspace{2pt} Since this paper was first 
posted to the pre-print arXiv a lively critique of this work has 
appeared \cite{Miao:2013isa}.

\section{Preliminaries}
\label{sec:prelims}

\subsection{de Sitter space}
\label{sec:dS}

In this paper we consider Einstein-Hilbert gravity in $D$ spacetime 
dimensions with positive cosmological constant $\Lambda > 0$. The
classical theory may be defined by the action
\eq{
  S_{\rm EH} = \frac{1}{16\pi G} \int d^Dx \sqrt{-g(x)}
  \left( R(g) - 2\Lambda \right) ,
}
or equivalently by the equations of motion
\eq{ \label{eq:EinsteinEq}
  G_{\mu\nu}(g) + \Lambda g_{\mu\nu}(x) = 0 ,
}
where $R(g)$ and $G_{\mu\nu}(g)$ are the Ricci scalar and Einstein tensor
constructed from the metric $g_{\mu\nu}(x)$ respectively.
de Sitter space is the maximally symmetric solution to these equations. 
The $D$-dimensional de Sitter manifold $dS_D$ may be defined as the 
single-sheet hyperboloid in an $\Reals^{D,1}$ embedding space:
\eq{
  dS_D :=
  \left\{ X \in \Reals^{D,1}\; | \; X \cdot X = \ell^2 \right\} .
}
The de Sitter radius $\ell$ is related to the cosmological constant
via
\eq{
  \Lambda = \frac{(D-1)(D-2)}{2\ell^2} .
}
The full de Sitter manifold has the topology $\Reals\times S^{D-1}$ where
$\Reals$ is the timelike direction; it has two conformal boundaries 
$\scrI^\pm$ which are Euclidean spheres $S^{D-1}$.
From the embedding space description it is manifest that the isometry 
group of de Sitter, a.k.a.~the de Sitter group, is $SO(D,1)$.

\begin{figure}[t!]
  \centering
  \includegraphics[width=0.3\textwidth]{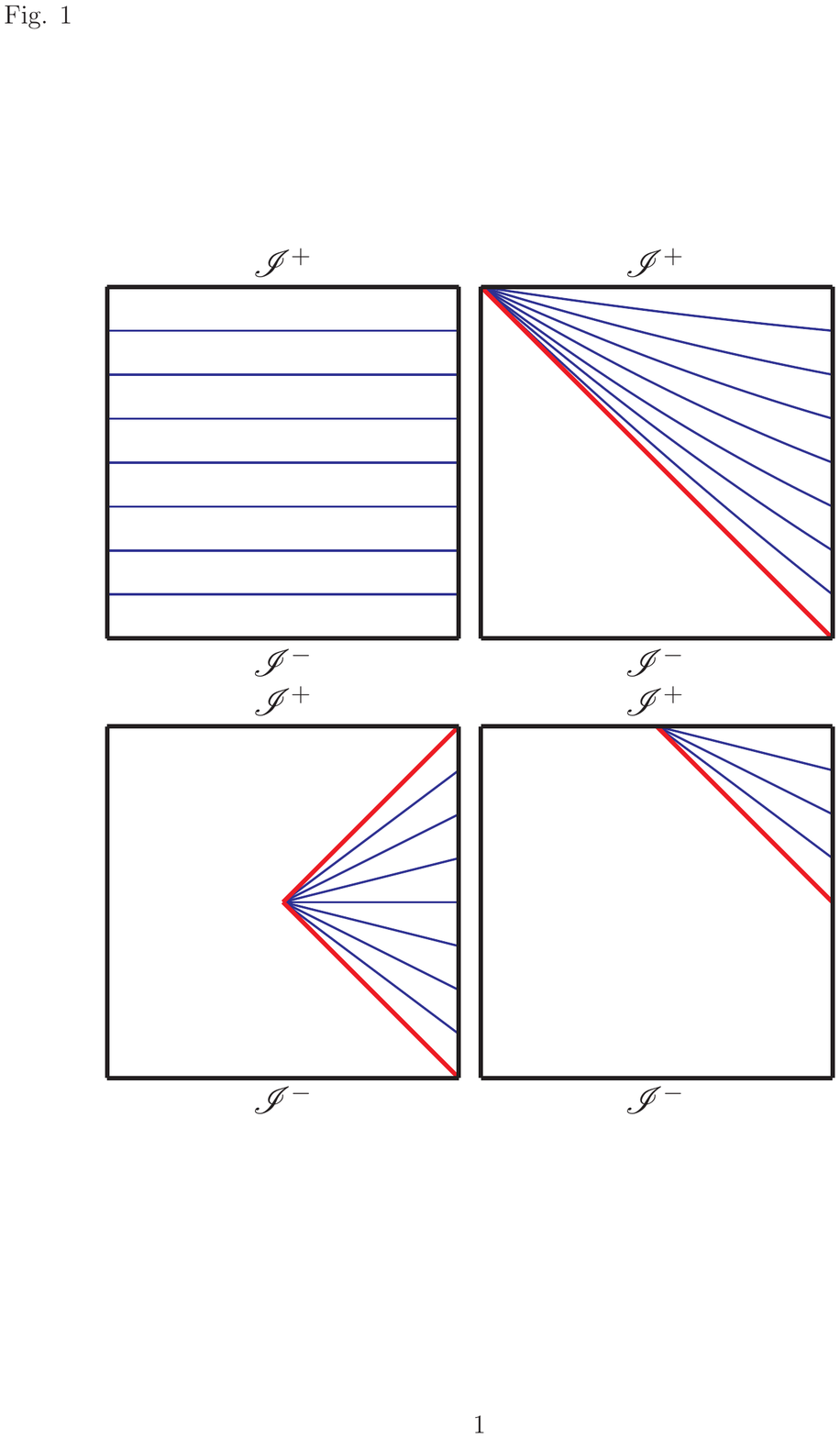}%
  \includegraphics[width=0.3\textwidth]{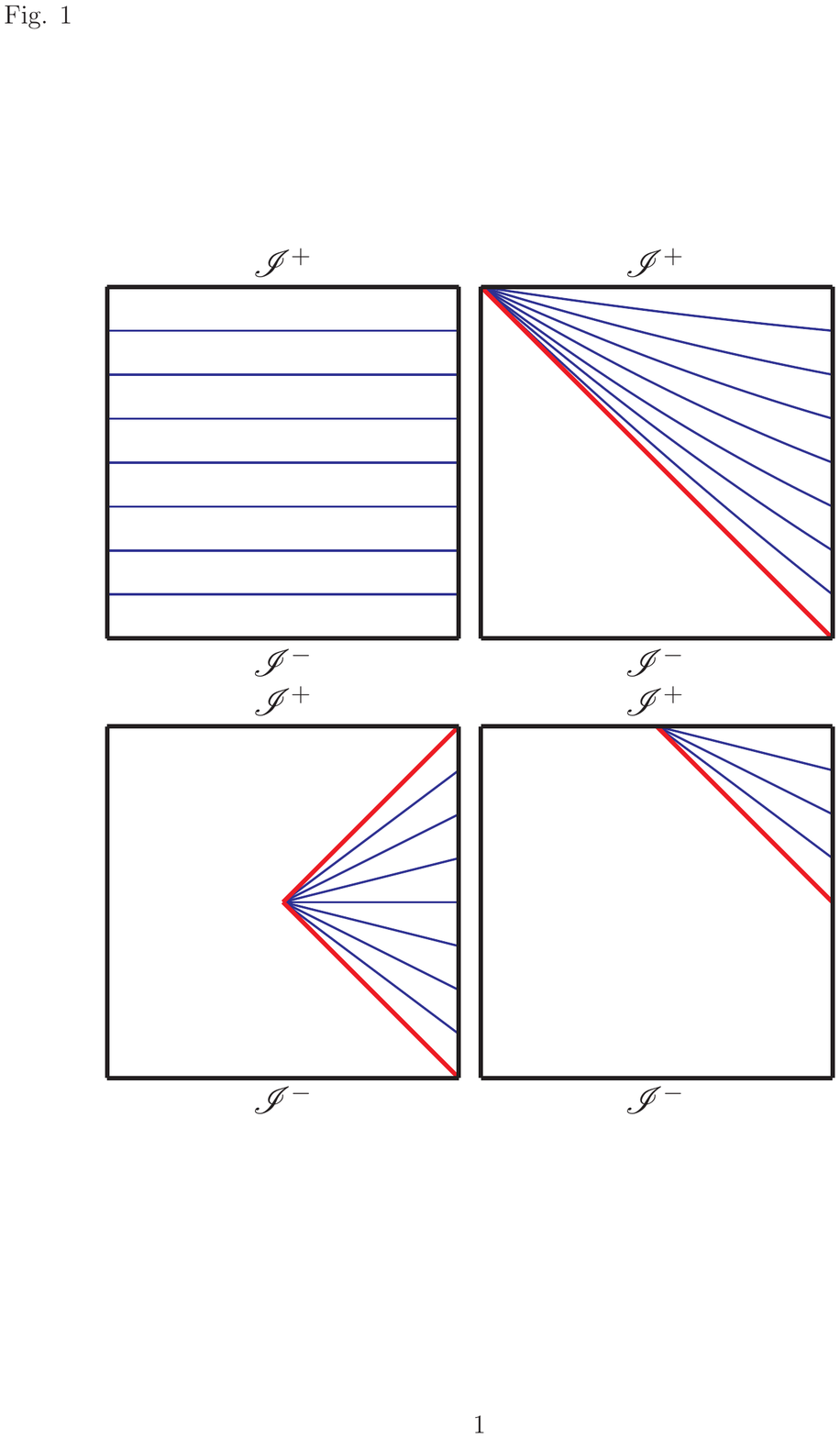}%
  \includegraphics[width=0.3\textwidth]{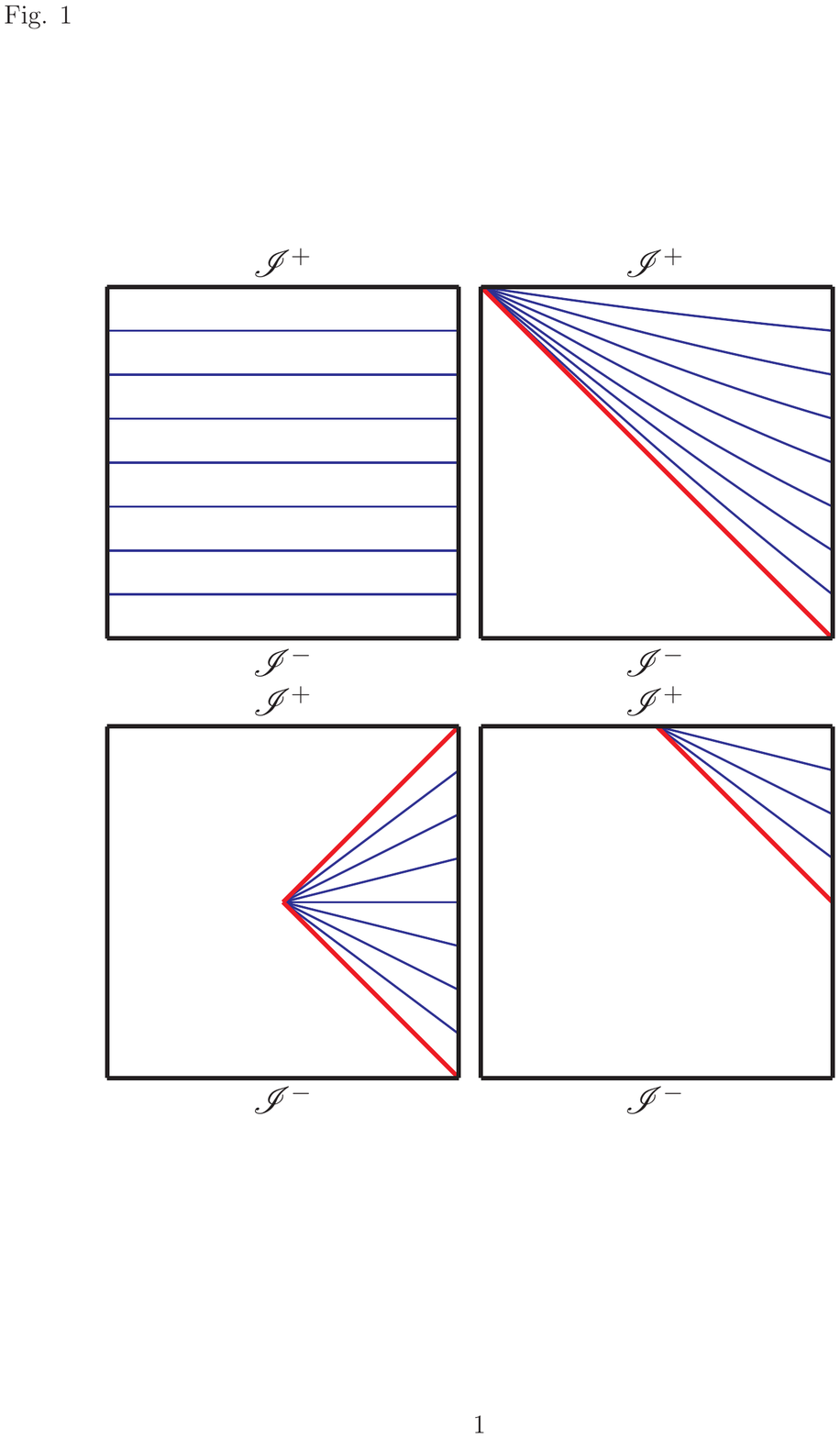}
  \caption{De Sitter coordinate charts depicted on the Carter-Penrose
    diagram.
    {\bf Left:} The global de Sitter chart with constant $t$ hypersurfaces
    depicted by blue lines. 
    {\bf Center:} The Poincar\'e chart with constant $\eta$ hypersurfaces 
    depicted by blue lines. This chart covers only the
    region above the cosmological horizon (solid red line).
    {\bf Right:} The static chart with constant $u$ hypersurfaces 
    depicted by blue lines. This chart covers only the region enclosed
    by past and future cosmological horizons (solid red lines).
    \label{fig:charts}
  }
\end{figure}

de Sitter space may be described by a number of coordinate charts,
the most common of which include the 
global, Poincar\'e (a.k.a.~cosmological), and static charts.
Although these charts cover different portions of the full de Sitter
manifold each chart defines a globally hyperbolic spacetime in
its own right.
Each chart is depicted on the de Sitter Carter-Penrose diagram
in Fig.~\ref{fig:charts}.
The global chart covers the entire manifold and is given by the 
line element
\eq{ \label{eq:gGlobal}
  \frac{ds^2}{\ell^2} = - dt^2 + (\cosh t)^2 d\Omega_{D-1}^2,
  \quad t \in \Reals .
}
Here $d\Omega_{D-1}^2$ is the line element on a unit $S^{D-1}$.
In this chart equal-time hypersurfaces are spheres of radius 
$\ell^2 (\cosh t)^2$. 
The Poincar\'e chart is given by
\eq{ \label{eq:gPoincare}
  \frac{ds^2}{\ell^2} = \frac{1}{\eta^2} \left[ - d\eta^2 
    + d\vec{x}\cdot d\vec{x} \right], 
  \quad \eta \in (-\infty, 0) ,
}
where $\vx$ denote vectors in $\Reals^{D-1}$.
This chart covers half of the full de Sitter manifold and
describes a cosmology with open spatial slices which expand exponentially
with increasing proper time. 
Finally, the static chart of de Sitter may be described by the line 
element
\eq{ \label{eq:gStatic}
  \frac{ds^2}{\ell^2} = 
  - \cos^2\theta du^2 + d\theta^2 + \sin^2\theta d\Omega_{D-2}^2,
  \quad u \in \Reals, \; \theta \in \left[ 0, \frac{\pi}{2} \right).
}
The static chart covers an even smaller region of de Sitter:
it is the largest region in which a de Sitter ``boost'' Killing vector
field $\d_u$ is timelike and future-directed.
Further discussion of de Sitter charts as well as other basic
features of de Sitter may be found in
\cite{Hawking:1973uf,Birrell:1982ix,Spradlin:2001pw}.

In order to conduct physics on de Sitter we need measures of 
distance and the causal structure.
As usual an invariant notion of distance between two points
$x_1,x_2 \in dS_D$ is provided by the signed, squared geodesic distance 
${\rm geod}(x_1,x_2)$, though it is more convenient to package
this information in the $SO(D,1)$-invariant 
``embedding distance'' \cite{Allen:1985ux}:
\eqn{ \label{eq:Z}
  Z(x_1,x_2) &:=& \frac{X_1 \cdot X_2}{\ell^2} 
  \nn \\
  &=& \left\{ 
    \begin{array}{lll}
    \cos\left[\ell^{-1}\sqrt{|{\rm geod}(x_1,x_2)|}\right] & \quad
    \text{spacelike separation} \\
    \cosh\left[\ell^{-1}\sqrt{|{\rm geod}(x_1,x_2)|}\right] & \quad
    \text{timelike, achronal separation} \\
    \end{array} 
    \right. .
}
The embedding distance satisfies
\begin{enumerate}[i)]
    \item $Z(x_1,x_2) \in [-1,1)$ for spacelike separation,
    \item $Z(x_1,x_2) = 1$ for null separation, 
    \item $Z(x_1,x_2) > 1$ for timelike separation, and
    \item $Z(x_1,x_2) < -1$ for achronal separation.
\end{enumerate}
The embedding distance is shown for various configurations
in Fig.~\ref{fig:Z}. In order to describe the causal structure
it is useful to introduce the function $s(x_1,x_2)$:
\eq{ \label{eq:s}
  s(x_1,x_2) := \left\{ 
    \begin{array}{lll}
      +1 & \quad & \text{if}\; x_1 \in J^+(x_2) \\
      -1 & \quad & \text{if}\; x_1 \in J^-(x_2) \\
      0  & \quad & \text{else} \\
      \end{array} \right. ,
}
where  $J^{+(-)}(x)$ denotes the causal future (past) of $x$.
Therefore the quantity $Z(x_1,x_2) - i \epsilon s(x_1,x_2)$,
which is invariant under $SO_0(D,1)$ rather than the
full isometry group $SO(D,1)$,
encodes both the geodesic distance between points as well as their 
causal relationship.
The following expressions give the embedding distance in the coordinate
charts described above:
\eqn{
  Z(x_1,x_2) &=& \left\{\begin{array}{ll}
      - \sinh t_1  \sinh t_2 + \cosh t_1 \cosh t_2 \cos \Omega_{12}  
      & ({\rm global}) \\
      1 - \frac{|\vec{x}_1 - \vec{x}_2|^2 - (\eta_1 - \eta_2)^2}{2\eta_1\eta_2}
      & (\text{Poincar\'e}) \\
      \cos\theta_1 \cos\theta_2 \cosh(u_1 - u_2) 
      + \sin\theta_1\sin\theta_2
      \cos \omega_{12} 
      & ({\rm static})
      \end{array}\right. . \nn \\
}
In these expressions $\Omega_{12}$ is the angular separation on $S^{D-1}$
of the global chart and $\omega_{12}$ is the angular separation 
on $S^{D-2}$ of the static chart respectively.

\begin{figure}[h!]
  \centering
  \includegraphics[width=0.5\textwidth]{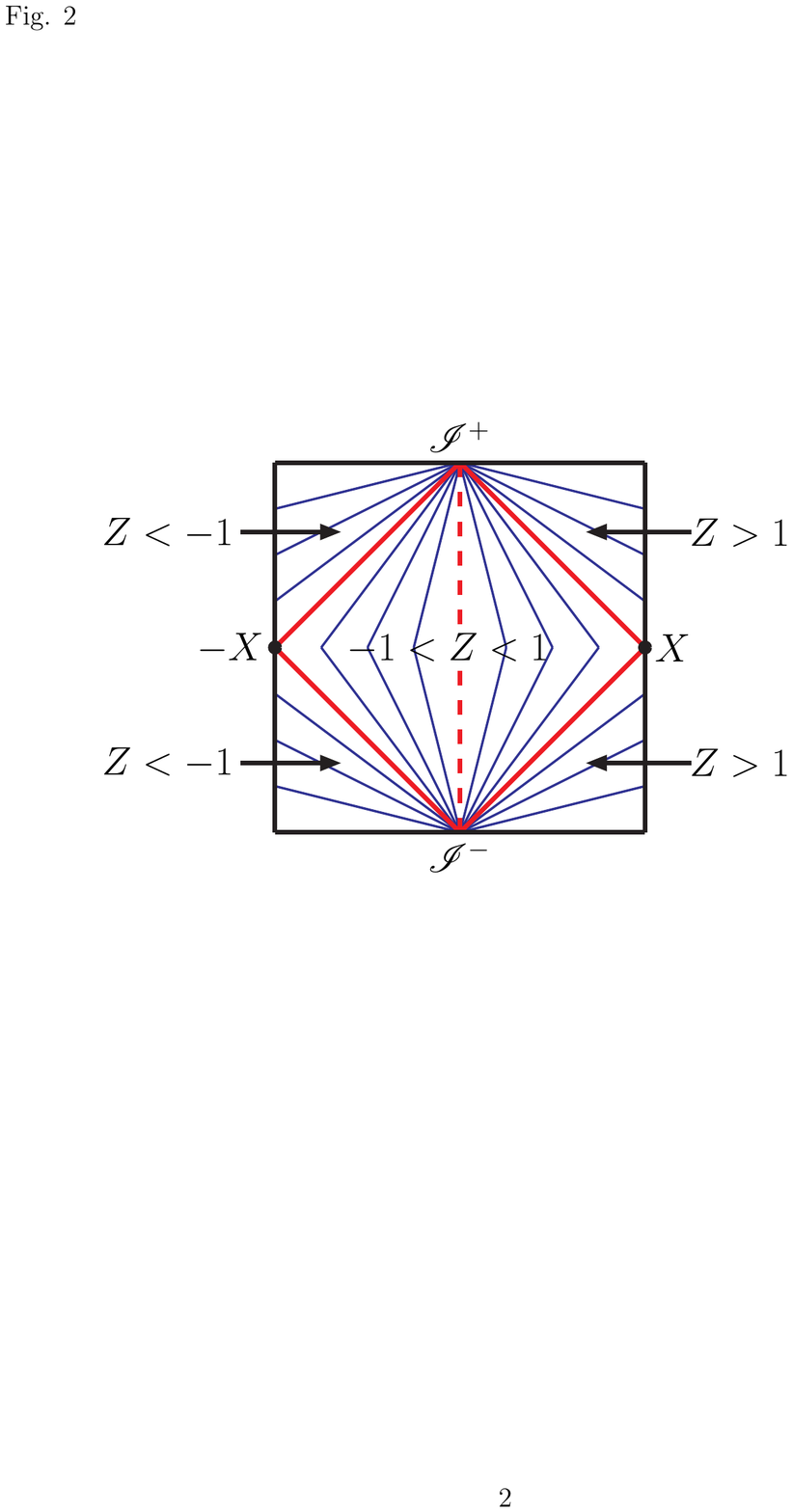}%
  \caption{The embedding distance $Z(x,y)$ on de Sitter.
    The point $x$ (equivalently $X$) and its
    antipodal point $-X$ are shown by dots. Thin blue lines
    depict constant $Z(x,y)$ contours, thick red lines denote
    the contours $Z(x,y)=\pm 1$, and the dashed red line denotes
    $Z(x,y) = 0$.
    \label{fig:Z}
  }
\end{figure}

\subsection{Linearized gravity}
\label{sec:gravity}

To obtain the theory of linear metric perturbations on de Sitter
space we let
\eq{
  g_{\mu\nu}(x) \to g_{\mu\nu}(x) + \sqrt{8\pi G} h_{\mu\nu}(x) ,
}
where $g_{\mu\nu}(x)$ is a de Sitter metric and $h_{\mu\nu}(x)$ a symmetric
perturbation.
All tensor indices are raised/lowered with the background metric 
$g_{\mu\nu}(x)$. The linearized equations of motion are\footnote{
  The authors of \cite{Higuchi:2012vy,Fewster:2012aa} refer to
  $L^{(1)}_{\mu\nu}(h)$ defined in (\ref{eq:L1}) as the linearized 
  Einstein tensor.}
\eqn{ \label{eq:L1}
  L^{(1)}_{\mu\nu}(h) 
  &:=& G^{(1)}_{\mu\nu}(h) + \Lambda h_{\mu\nu}(x)
  \nn \\
  &=& \half \Box h_{\mu\nu}(x)
  +\half \nabla_\mu\nabla_\nu h(x) 
  - \nabla_{(\mu}\nabla^\l h_{\nu)\l}(x)
  - \frac{1}{\ell^2} h_{\mu\nu}(x)
  \nn \\ & &
  + \half \left(\nabla^\a\nabla^\b h_{\a\b}(x)
    - \Box h(x) + \frac{3-D}{\ell^2} h(x) \right) g_{\mu\nu}(x) 
  \nn \\
  &=& 0 ,
}
where $h(x) = h^\nu_{\phantom{\nu}\nu}(x)$.\footnote{When there
  is no risk of confusion we will omit tensor indices on arguments.
  This does not mean the argument is simply the trace of the tensor.
}
The operator $L^{(1)}_{\mu\nu}(h)$ is symmetric, transverse 
$\nabla^\mu L^{(1)}_{\mu\nu}(h) = 0$, and linear in the usual sense:
$L^{(1)}_{\mu\nu}(\alpha h_{\mu\nu} + \beta \gamma_{\mu\nu})
= \alpha L^{(1)}_{\mu\nu}(h) + \beta L^{(1)}_{\mu\nu}(\gamma)$
for $\a,\beta \in \mathbb{C}$. 
It is also self-adjoint in the sense that for tensors
$f_{\a\b}(x)$, $q_{\mu\nu}(x)$ such that $f_{\a\b}(x) q_{\mu\nu}(x)$ has
compact support then
\eq{ \label{eq:Ladjoint}
  \int d^D \sqrt{-g} \, q^{\mu\nu}(x) L^{(1)}_{\mu\nu}(f) 
  = \int d^D \sqrt{-g} \, f^{\mu\nu}(x) L^{(1)}_{\mu\nu}(q) .
}

The equations of motion are left invariant under the field shift
\eq{ \label{eq:gaugeXF}
  h_{\mu\nu}(x) \to h_{\mu\nu}(x) + \Lie_\xi g_{\mu\nu}(x)
  =  h_{\mu\nu}(x) + 2\nabla_{(\mu} \xi_{\nu)}(x) ,
}
for any 1-form $\xi_\mu(x)$. We refer to (\ref{eq:gaugeXF}) as the 
gauge symmetry of linearized gravity.
As for curvature quantities, it follows that 
the linearized Einstein tensor $G_{\mu\nu}^{(1)}(h)$ is not gauge-invariant.
The linearized Weyl tensor $C^{(1)\a}_{\ph{(1)\a}\b\gamma\delta}(h)$ is
gauge invariant
\eq{ \label{eq:Weylgrad}
  C^{(1)\a}_{\ph{(1)\a}\b\gamma\delta}(\Lie_\xi g) = 0 ,
}
and is also invariant under linearized Weyl transformations
\eq{ \label{eq:Weylmet}
   C^{(1)\a}_{\ph{(1)\a}\b\gamma\delta}(\omega g_{\mu\nu}) = 0 ,
}
for arbitrary $\omega(x)$. Because the Weyl tensor of the de Sitter
background vanishes these equations are valid for any index configuration.
An explicit formula for $C^{(1)\a\b}_{\ph{(1)\a\b}\gamma\delta}(h)$ 
is
\eqn{ \label{eq:Weyl}
  C^{(1)\a\b}_{\ph{(1)\a\b}\gamma\delta}(h)
  &:=& \sqrt{8\pi G}\, \Upsilon^{[\a\b]\,[\rho\s]}_{[\gamma\delta]\,[\mu\nu]} 
  \left( \nabla_\rho\nabla^\mu 
      + \ell^{-2} \delta^\mu_\rho \right) h^\nu_{\ph{\nu}\s} ,
  \\
  \label{eq:Upsilon}
  \Upsilon^{\a\b\,\rho\s}_{\gamma\delta\,\mu\nu} &:=&
  -2 \left( 
    \delta^\a_\mu \delta^\b_\nu \delta^\rho_\gamma \delta^\s_\delta
    + \frac{4}{(D-2)}
    \delta^\a_\mu \delta^\b_\gamma \delta^\rho_\delta \delta^\s_\nu
    + \frac{2}{(D-2)(D-1)}
    \delta^\a_\gamma \delta^\b_\delta \delta^\rho_\mu \delta^\s_\nu
    \right) . \nn \\
}
Many useful lemmas, theorems, are formulae for linearized gravity in 
vacuum cosmological spacetimes are presented in \cite{Fewster:2012aa}.

\subsection{Quantization}
\label{sec:quantum}

In this section we outline our quantization procedure
which is an example of the algebraic approach to QFT in curved spacetime 
\cite{Wald:1995yp}.\footnote{A thorough treatment of the 
algebraic approach to QFT in Minkowski space is presented in
\cite{Haag:1992aa}.}
Although there are several advantages to this approach
(see e.g. the discussions in \cite{Wald:2006ty,Kay:2006jn,Hollands:2008vx}),
the immediate advantage for our purposes is that it
allows us to quantize linearized gravity in covariant gauges
which in turn allow us to preserve manifest de Sitter symmetry throughout
our computations.
These gauges are not a complete gauge fixing and are analogous to
Lorentz gauge in vector gauge theories.
The algebraic quantization of linearized gravity on cosmological
backgrounds has been treated previously by
Fewster and Hunt \cite{Fewster:2012aa}.

A second advantage of the algebraic approach is that it allows us
to discuss in a unified manner the quantization of gravity in each 
of the de Sitter backgrounds described in \S\ref{sec:dS}.
Other tactics such as path integral or ``mode
quantization'' approaches, if performed in any detail, would require us 
to work separately on each background as each admits different 
sets of solutions to the classical equations of motion.
Various technical issues then arise in comparing the resulting Hilbert
spaces. In any case, we will refer the background de Sitter 
spacetime simply as ``$dS_D$'' which implicitly includes a choice of
chart and will make chart-specific comments as needed.
We also adopt the notation
\eq{
  \int_x F(x) := \int d^Dx \sqrt{-g(x)} F(x) ,
}
where similarly the volume element is that of the background of interest.
The integrands of such expressions will always have compact support. 

\subsubsection{Observables}

Local observables of the quantum theory are the analogues of local, 
gauge-invariant quantities of the classical theory.
Observables constructed directly from the
metric perturbation are of the form \cite{Fewster:2012aa}
\eq{ \label{eq:hObs}
  h(f) := \int_x
  f^{\mu\nu}(x) h_{\mu\nu}(x), \quad f \in \testT ,
}
where $f^{\mu\nu}(x)$ is a ``test function'' that is compactly supported, 
symmetric, and transverse, i.e., it belongs to the class\footnote{Test
function tensor indices may be freely raised and lowered with the
de Sitter background metric, so the choice to define test functions 
as $T^2_0$ tensors rather than $T^0_2$ is arbitrary. We will typically 
adopt covariant/contravariant indices as is convenient for notational 
clarity.}
\eq{ \label{eq:testT}
  \testT := \left\{ f^{\mu\nu} \in C^\infty_0(T^2_0(dS_D)) \; | \;    
  f^{\mu\nu} = f^{\nu\mu}, \; \nabla_\mu f^{\mu\nu} = 0 \right\} .
}
In principal compact support can include support
on an entire $S^{D-1}$ Cauchy surface in global dS, so long as
the support in the timelike direction is compact.
The compactness and transversality of $f^{\mu\nu}(x)$ guarantee that 
$h(f)$ is gauge-invariant. In addition these observables satisfy
\begin{enumerate}[O1.]
  \item\label{O:linear} Linearity:
    $\quad
      h(\alpha f^{\mu\nu} + \beta p^{\mu\nu})
      = \alpha h(f) + \beta h(p) , \quad \a,\b \in \mathbb{C},
    $
  \item\label{O:hermiticity} Hermiticity:
    $\quad
      h^\dagger(f) = h(f^*) ,
    $
  \item\label{O:EOM} Equations of motion:
    $\quad
      h(L^{(1)}(f)) = 0 ,
    $
  \item\label{O:locality} Locality (canonical commutation relations):
    $\quad
      \left[ h(f), h(p) \right] = i \Delta(f, p) ,
    $
\end{enumerate}
where $\Delta(f,p)$ is the smeared commutator function, i.e. the 
unique advanced-minus-retarded solution to the equations of motion
\cite{Friedlander:1975aa}.

One may also consider observables constructed from the linearized
Weyl tensor:
\eq{ \label{eq:CObs}
  C^{(1)}(v) := \int_x v^{\mu\nu\rho\sigma}(x) 
  C^{(1)}_{\mu\nu\rho\sigma}(h),
  \quad v \in C^\infty_0(T^4_0(dS_D)) .
}
Smeared ``Wick powers'' of the linearized Weyl tensor
such as $C^{(1)}_{\a\b\gamma\delta} C^{(1)\mu\nu\rho\s}(x)$, 
$C^{(1)}_{\a\b\gamma\delta} C^{(1)\gamma\delta\mu\nu}(x)$, etc., are 
also local observables provided an ordering prescription for defining
composite operators (such as that of \cite{Hollands:2001fk} or 
\cite{Brunetti:1995rf}).

Consider the set of observables of the forms (\ref{eq:hObs}), (\ref{eq:CObs}),
as well as their associated gauge-invariant Wick powers,
all constructed from test functions whose support is contained in the
compact region $\cO \subset dS_D$.
Along with the identity element, the polynomial algebra generated by 
finite sums of finite products of members of this set form a 
unital *-algebra $\cA(\cO)$; the union of the algebras of all 
such open sets on $dS_D$ to define the total algebra of local observables
\eq{
  \cA(dS_D) = \cup_\cO \,\cA(\cO) .
}

\subsubsection{States}

A quantum state $\Psi$ is a linear functional on the algebra of 
observables $\Psi: \cA(dS_D) \to \Complex$ which additionally satisfies
\begin{enumerate}[S1.]
  \item \label{S:norm}
    Normalization: $\quad\C{1}_\Psi = 1$, 
  \item \label{S:pos}
    Positivity: (a.k.a. ``unitarity''):
    $\quad
      0 \le \C{ A^\dagger A }_\Psi < \infty ,
      \quad \forall \; A \in \cA(dS_D) ,
    $
  \item $\Psi$ must be of the Hadamard type.
\end{enumerate}
The first two requirements are familiar and are necessary
for $\Psi$ to admit a quantum mechanical interpretation as 
providing conditional probabilities.
The third criteria is a choice; it is a regularity 
condition which assures that the states we consider
i) reproduce familiar Minkowski space physics at distances much less
than the de Sitter radius, and
ii) allow local and covariant techniques for defining the 
Wick powers described below (\ref{eq:CObs}).
These demands drive us to consider the class of states
which are ``globally Hadamard'' \cite{Wald:1995yp}.\footnote{A 
  simple example of non-Hadamard states are the
  Mottola-Allen or $\alpha$-vacua \cite{Mottola:1984ar,Allen:1985ux}
  (excepting the Euclidean state).
  The correlation functions of these states 
  posses singularities at spacelike separations  and differ in the 
  character of their lightcone singularity from that of the standard 
  Minkowski vacuum at arbitrarily short distances \cite{Bousso:2001mw}. 
  For these states normal ordering does not yield well-defined
  composite objects \cite{Brunetti:2005pr}.
  Of course, the ``states of interest'' depend on the context, and
  interesting things have been done with these states
  \cite{Lagogiannis:2011st}.
}

For scalar field theory a quasi-free Hadamard state is one for which
the 2-pt function is singular only at null separations, and moreover
this singularity is pure ``positive frequency.''
More precisely, the nature of a distribution's singularity may be
described by its wave front set (WF) \cite{Hormander:1990aa}.
A scalar 2-pt function $\Delta(x_1,x_2)$ is Hadamard if its 
wave front set is given by \cite{Radzikowski:1996aa}
\eq{ \label{eq:HadamardWF}
  \WF( \Delta )
  = \{ (x_1, k_1 ; x_2, -k_2) \in (T^*dS_D \setminus \{0\})^2 \; | 
  \; (x_1,k_1)\sim (x_2,k_2),
  \; k_1 \in V_1^+ \} ,
}
where $(x_1,k_1)\sim (x_2,k_2)$ denotes that $x_1$ and $x_2$ may
be joined by a null geodesic $k_1$ and $k_2$ are cotangent and
coparallel to that geodesic.
In this paper we will deal with tensor-valued fields.
We will refer to a graviton state as Hadamard if, in addition to
being quasi-free, the state's 2-pt function wave front set is 
given by (\ref{eq:HadamardWF}),
where the wave front set of tensor distribution is simply the
union of the the wave front sets of its components in a
local trivialization \cite{Hormander:1990aa}.
This definition is sufficient to guarantee the desired properties
i) and ii) above and is the natural generalization of the Hadamard 
condition for vector fields
\cite{Sahlmann:2001aa,Dappiaggi:2011cj}. 
The Hadamard condition for tensor distributions may also be formulated
in terms of ``Hadamard's fundamental solutions'' 
\cite{Allen:1988aa,DeWitt:1960aa}.

\subsubsection{Gauge symmetry} 
\label{sec:gauge}

Let us now address the gauge symmetry of the quantum theory. 
We would like the qauntum theory to admit
the same gauge symmetry as the classical equations of motion;
in particular, we consider any field redefinition of the form
$h_{\mu\nu}(x) \to h_{\mu\nu}(x) + 2\nabla_{(\mu}\xi_{\nu)}(x)$
for \emph{any} smooth 1-form field $\xi_{\nu}(x)$ to be a valid
gauge transformation. The observables constructed above
are indeed invariant under any such redefinition.
In the cosmology literature it is common to focus on quantization
on the Poincar\'e chart, and to restrict 
the gauge freedom to $\nabla_{(\mu}\xi_{\nu)}(x)$ which vanish 
sufficiently rapidly near the spatial conformal boundaries of
this chart. We do not impose any such restriction.

In order to construct graviton 2-pt functions in a manner preserving
as much symmetry as possible we adopt
covariant gauge conditions. These gauge conditions do not fully
fix the gauge redundancy and so may be thought of as ``partial''
gauge conditions which are sufficient to allow us to invert the 
equations of motion.\footnote{This level of gauge fixing
is insufficient to render a path integral formulation well-defined.
Presumably standard procedures such as the Stuckelburg formulation
can be used to formulate a path-integral in the covariant gauges
we consider, but we do not pursue this here.}
We will utilize two types of gauges. The first is the class of 
generalized de Donder gauges \cite{Mora:2012zi} which satisfy
the gauge condition
\eq{ \label{eq:dD}
  \nabla^\nu h_{\mu\nu}(x) -\frac{\beta}{2} \nabla_\mu h(x) = 0 ,
  \quad \beta \in \Reals.
}
The choice $\beta = 1$ it typically referred to as de Donder or
harmonic gauge while $\beta = 0$ is transverse gauge.
At least for generic values of $\beta$ we expect
to be able to impose this gauge condition non-linearly.
The de Donder gauges (\ref{eq:dD}) do not completely fix the gauge
freedom as it is possible to construct vector fields $\xi^\mu(x)$ such
that $\nabla_{(\mu} \xi_{\nu)}(x)$ satisfy (\ref{eq:dD}).

In linearized gravity one may use the linearized equations of motion
to impose further gauge conditions; in particular, one may impose
that solutions to the equations of motion be in transverse traceless (TT)
gauge:
\eq{ \label{eq:TTgauge}
  \nabla^\nu h_{\mu\nu}(x) = 0, \quad h(x) = 0 .
}
On dS there exist vector fields $\xi^\mu(x)$ such
that $\nabla_{(\mu} \xi_{\nu)}(x)$ satisfy (\ref{eq:TTgauge})
(see, e.g., \cite{Higuchi:1991tn} or Appendix F of \cite{Park:2008ki}), so
there still exists residual gauge symmetry in TT gauge.\footnote{The gauge
may be fixed completely by imposing, e.g., transverse traceless synchronous
gauge where in addition to (\ref{eq:TTgauge}) one imposes
$h_{t\mu}(x) = 0$ for some time coordinate $t$, but this introduces 
a preferred timelike direction which is undesirable for our investigation.}
For most of our analysis below we adopt TT gauge as we work exclusively
at in the linearized theory and this gauge makes our analysis rather
simple. In \S\ref{sec:deDonder} we compute the 2-pt function of the
state $\Omega$ in the de Donder gauges as well so that it may be
utilized in non-linear perturbation theory.

A very useful fact is that in any gauge, when acting on solutions 
to the linearized  equations of motion, the set of test functions 
$\testT$ defined in (\ref{eq:testT}) may be further restricted to 
the class of TT test functions \cite{Higuchi:2012vy}
\eq{ \label{eq:testTT}
  \testTT := \left\{ f^{\mu\nu} \in C^\infty_0(T_0^2(dS_D)) \; | \;    
    f^{\mu\nu} = f^{\nu\mu}, \; \nabla_\mu f^{\mu\nu} = 0, \;
    g_{\mu\nu} f^{\mu\nu} = 0 \right\} .
}
That is, for every $f \in \testT$ there exists a $p \in \testTT$
such that $\C{h(f)\dots}_\Psi = \C{h(p)\dots}_\Psi$.

\section{The state $\Omega$}
\label{sec:2pt}

In this section we construct the maximally-symmetric state $\Omega$ 
by computing the 2-pt function $\CO{h_{\mu\nu}(x)h^{\bmu\bnu}(\bx)}$
in TT gauge. 
Our derivation of $\CO{h_{\mu\nu}(x)h^{\bmu\bnu}(\bx)}$ is straight-forward 
but it utilizes a great amount of simple technology which we spend the
next three subsections describing. We finally compute 
$\CO{h_{\mu\nu}(x)h^{\bmu\bnu}(\bx)}$ in \S\ref{sec:TT}, as well as
verify the Hadamard and positivity properties of this 2-pt function.
We finish this section by contrasting our result with earlier works
in \S\ref{sec:compare}. Henceforth we set the
de Sitter radius $\ell = 1$.

\subsection{Transverse traceless projection operator}
\label{sec:P}

A convenient way to impose the transverse and traceless conditions
on the metric perturbation is via
a transverse traceless projection operator \cite{Miao:2011fc}.
It is natural to construct this operator from the linearized
Weyl tensor; from the symmetries of the Weyl tensor it follows that the
operation $\nabla^\gamma\nabla^\delta C^{(1)}_{\gamma\mu\delta\nu}(h)$
constructs from any symmetric tensor $h_{\a\b}(x)$ a symmetric, rank-2 TT 
tensor. Therefore we define the TT projection operator
$\P_{\mu\nu}^{\phantom{\mu\nu}\alpha\beta}$ via
\eq{
  \P_{\mu\nu}^{\phantom{\mu\nu}\alpha\beta} h_{\a\b}
  := \nabla^\gamma\nabla^\delta C^{(1)}_{\gamma\mu\delta\nu}(h) .
}
This operator is symmetric on each pair of indices
$\P_{\mu\nu}^{\phantom{\mu\nu}\alpha\beta} = 
\P_{\mu\nu}^{\phantom{\mu\nu}(\alpha\beta)} = 
\P_{(\mu\nu)}^{\phantom{(\mu\nu)}\alpha\beta}$ and is transverse and 
traceless on indices $\mu$,$\nu$:
\eq{
  \nabla^\nu \P_{\mu\nu}^{\phantom{\mu\nu}\alpha\beta} f_{\alpha\beta} = 0, 
  \quad 
  g^{\mu\nu} \P_{\mu\nu}^{\phantom{\mu\nu}\alpha\beta} f_{\alpha\beta} = 0 .
}
Additional properties of $\P_{\mu\nu}^{\phantom{\mu\nu}\alpha\beta}$ we will
need are \cite{Miao:2011fc}: 
\begin{enumerate}[i)]
\item 
  the d'Alembertian 
  commutes with $\P_{\mu\nu}^{\phantom{\mu\nu}\alpha\beta}$, 
  i.e. $\Box \P_{\mu\nu}^{\phantom{\mu\nu}\a\b} f_{\a\b}(x)
  = \P_{\mu\nu}^{\phantom{\mu\nu}\a\b} \Box f_{\a\b}(x)$, 
\item 
  the action of $\P_{\mu\nu}^{\phantom{\mu\nu}\a\b}$ on TT tensors 
  is
  \eq{ \label{eq:PonTT}
    \P_{\mu\nu}^{\phantom{\mu\nu}\a\b} w_{\a\b}(x)
    =  - \half \frac{(D-3)}{(D-2)} (\Box - 2)(\Box-D) w_{\mu\nu}(x) , 
  }
\item
  if the product $p^{\mu\nu}(x) f_{\a\b}(x)$ is compactly supported
  then $\P_{\mu\nu}^{\phantom{\mu\nu}\a\b}$ is self-adjoint in the sense
  that
  \eq{ \label{eq:Padjoint}
    \int_x p^{\mu\nu}(x) \P_{\mu\nu}^{\phantom{\mu\nu}\a\b} f_{\a\b}(x)
    = \int_x f^{\mu\nu}(x) \P_{\mu\nu}^{\phantom{\mu\nu}\a\b} p_{\a\b}(x).
  }
\end{enumerate}

Obviously, the projection operator annihilates any tensor $q_{\mu\nu}(x)$ 
for which $\nabla^\mu\nabla^\s C^{(1)}_{\mu\nu\sigma\rho}(q) = 0$.
This includes total derivative and metric terms
\eq{ \label{eq:PonGradMet}
  \P_{\mu\nu}^{\phantom{\mu\nu}\alpha\beta} \nabla_{\alpha} f_\beta(x) = 0 ,
  \quad
  \P_{\mu\nu}^{\phantom{\mu\nu}\alpha\beta} (f(x) g_{\alpha\beta}) = 0 ,
}
as for these terms $C^{(1)}_{\a\b\gamma\delta}(q) = 0$ (recall 
(\ref{eq:Weylgrad}) and (\ref{eq:Weylmet})). 
In addition we determine from (\ref{eq:PonTT}) that
the projection operator annihilates TT solutions to
the de Sitter Fierz-Pauli equation \cite{Fierz:1939aa} 
\eq{
  (\Box - M^2-2) h_{\mu\nu}(x) = 0, 
}
(the factor of $2$ arising from the cosmological constant term
in the action)
with mass values of $M^2 = 0$ and $M^2 = (D-2)$. 
Solutions for $M^2 = 0$ are solutions to linearized
Einstein equations (\ref{eq:L1}) and satisfy 
$\nabla^\a C^{(1)}_{\a\b\gamma\delta}(q) = 0$; solutions for $M^2 = (D-2)$
 are sometimes called ``partially massless,'' correspond to the 
Higuchi lower bound for unitary massive spin-2 fields \cite{Higuchi:1986py},
and satisfy $\nabla^\a\nabla^\gamma C^{(1)}_{\a\b\gamma\delta}(q) = 0$.

\subsection{Maximally symmetric bi-tensors}
\label{sec:MSBTs}

It is convenient to work with maximally symmetric bi-tensors (MSBTs) 
as these manifestly preserve 
maximal symmetry (classic references on these objects include 
\cite{Allen:1985wd,Allen:1986qj,Allen:1986tt,Allen:1994yb}). The
tensor structures of MSBTs at $x$, $\bx$ are constructed by taking 
covariant derivatives of $Z := Z(x,\bx)$ in the tangent spaces of 
$x$ and $\bx$ respectively. 
For rank-2 symmetric MSBTs there are five allowed index structures:
\eqn{
  \label{eq:t1}
  t_{\mu\nu}^{(1)\; \bmu\bnu} &:=&
  g_{\mu\nu} g^{\bmu\bnu} ,
  \\
  \label{eq:t2}
  t_{\mu\nu}^{(2)\; \bmu\bnu} &:=&
  (\nabla_{(\mu}\nabla^{(\bmu}Z) (\nabla_{\nu)}\nabla^{\bnu)}Z) ,
  \\
  \label{eq:t3}
  t_{\mu\nu}^{(3)\; \bmu\bnu} &:=&
  (\nabla_{(\mu}Z)(\nabla^{(\bmu}Z) (\nabla_{\nu)}\nabla^{\bnu)}Z) ,
  \\
  \label{eq:t4}
  t_{\mu\nu}^{(4)\; \bmu\bnu} &:=&
  (\nabla_\mu Z)(\nabla_\nu Z)(\nabla^\bmu Z)(\nabla^\bnu Z) ,
  \\
  \label{eq:t5}
  t_{\mu\nu}^{(5)\; \bmu\bnu} &:=&
  \left[ g_{\mu\nu} (\nabla^\bmu Z)(\nabla^\bnu Z) + 
  (\nabla_\mu Z)(\nabla_\nu Z) g^{\bmu\bnu} \right] .
}
Any rank-2 symmetric MSBT may be written in terms of these tensors
with five scalar coefficient functions of $Z$, i.e.,
\eq{ \label{eq:dumb}
  M_{\mu\nu}^{\phantom{\mu\nu}\bmu\bnu}(Z)
  = \sum_{i=1}^5 a_i(Z) t_{\mu\nu}^{(i)\; \bmu\bnu} .
}
However, often this is not the most convenient way of organizing
the five scalar functions which determine 
$M_{\mu\nu}^{\phantom{\mu\nu}\bmu\bnu}(Z)$; following \cite{DHoker:1999aa}, 
we note that the most general $M_{\mu\nu}^{\phantom{\mu\nu}\bmu\bnu}(Z)$
may also be written in the forms
\eqn{ \label{eq:sly}
  M_{\mu\nu}^{\phantom{\mu\nu}\bmu\bnu}(Z)
  = b_j(Z) t_{\mu\nu}^{(j)\; \bmu\bnu} 
  + b_1(Z) t_{\mu\nu}^{(1)\; \bmu\bnu} 
  + \sum_{i=1}^3 G^{(i)\; \bmu\bnu}_{\mu\nu}(c_i(Z)) ,
  \quad j = 2 \; {\rm or} \; 4 ,\quad\quad
}
where the $G^{(i)\; \bmu\bnu}_{\mu\nu}(c_i(Z))$ are MSBTs which
are total derivatives at $x$ and/or $\bx$:
\eqn{
  G^{(1) \; \bmu\bnu}_{\mu\nu}(c_1(Z)) 
  &:=& \left[ g_{\mu\nu}\nabla^\bmu\nabla^\bnu +
    g^{\bmu\bnu}\nabla_\mu\nabla_\nu \right] c_1(Z) ,
  \label{eq:G1}
  \\
  G^{(2) \; \bmu\bnu}_{\mu\nu}(c_2(Z)) 
  &:=& \nabla_{(\mu} 
  \left[ c_2(Z) (\nabla^{(\bmu} Z) (\nabla_{\nu)} \nabla^{\bnu)} Z) \right]
  + (x \leftrightarrow \bx) ,
  \label{eq:G2}
  \\
  G^{(3) \; \bmu\bnu}_{\mu\nu}(c_3(Z)) 
  &:=& \nabla_{(\mu} 
  \left[ c_3(Z) (\nabla_{\nu)} Z) (\nabla^\bmu Z)(\nabla^\bnu Z) \right]
  + (x \leftrightarrow \bx) .
  \label{eq:G3}
}
By expanding the expressions (\ref{eq:G1})-(\ref{eq:G3}) one may
readily verify that $b_j(Z)$, $b_1(Z)$, and the three $c_i(Z)$ uniquely
determine the five $a_i(Z)$ in (\ref{eq:dumb}).

Tensor indices belonging the same tangent space are easily 
contracted and may be simplified by noting that 
$\nabla_\mu X^A$, where $X^A$ with $A=0,\dots,D$ are Cartesian coordinates
in the embedding space, are conformal Killing vectors on $dS_D$:
\eq{
  \nabla_\mu\nabla_\nu X^A = - X^A g_{\mu\nu}, \quad \Rightarrow \quad
  \nabla_\mu \nabla_\nu Z = - Z g_{\mu \nu} ,
}
and as a result $Z$ is a maximally symmetric bi-scalar 
eigenfunction of the d'Alembertian:
\eq{ \label{eq:BoxZ}
  \Box Z = \bBox Z = -D Z .
}
Other useful contractions that follow from (\ref{eq:BoxZ}) include:
\eqn{ \label{eq:Zids}
  (\nabla^\mu Z) (\nabla_\mu Z) &=& (1-Z^2) , \nn \\
  (\nabla^\mu Z) (\nabla_\mu\nabla_\bnu Z) &=& -Z(\nabla_\bnu Z) , \nn \\
  (\nabla^\mu Z) (\nabla^\bnu Z) (\nabla_\mu\nabla_\bnu Z) , 
  &=& -Z(1-Z^2) .
}

MSBTs may also be parametrized in terms of the parallel propagator 
$g_{\mu\bnu}(Z)$ and the unit normal vector $n_\mu(Z)$ which is
tangent to the shortest geodesic between $x$ and $\bx$\cite{Allen:1985wd}. 
These objects are defined in terms of derivatives of $Z$ as follows:
\eqn{ \label{eq:pp}
  g_{\mu\bnu}(Z) 
  &:=& (1+Z) \nabla_\mu\nabla_\bnu \ln(1+Z)
  = \nabla_\mu\nabla_\bnu Z 
  - \frac{1}{(1+Z)} (\nabla_\mu Z)(\nabla_\bnu Z) .
 \\
 \label{eq:n}
 n_\mu(Z) &:=& \frac{-1}{(1-Z^2)^{1/2}} \nabla_\mu Z .
}
An advantage of these variables is that they are bounded
functions of $Z$; a disadvantage is that their covariant 
derivatives are cumbersome.
Using the parallel propagator one may translate the tensor structure
of a tensor in the tangent space of $\bx$ to that of $x$, but the 
resulting object is still a function of $\bx$, i.e.,
$g_{\mu}^{\phantom{\mu}\bnu}V_\bnu(\bx)$ defines a 1-form $V_\mu(\bx)$
which is not in general equivalent to $V_\mu(x)$.

Often we will only need to know an MSBT modulo terms proportional
to the metric or terms which are total derivatives at $x$ or $\bx$.
We refer to such terms as `$\met$' and `$\grad$' terms respectively.
For instance, we may write the most general MSBT (\ref{eq:sly}) as
\eq{ \label{eq:sly2}
  M_{\mu\nu}^{\phantom{\mu\nu}\bmu\bnu}(Z)
  = b_2(Z) \T2{\mu}{\nu}{\bmu}{\bnu} + \met + \grad .
}
Later it will be useful to note that
\eq{ \label{eq:T2gauge}
  \T2{\mu}{\nu}{\bmu}{\bnu} = \met + \grad, \quad
  Z \T2{\mu}{\nu}{\bmu}{\bnu} = \met + \grad ,
}
and that these are the only tensors with index structure 
$\T2{\mu}{\nu}{\bmu}{\bnu}$ which are ``pure $\met + \grad$.''

\subsection{K\"allen-Lehmann representations}
\label{sec:scalar}

We will write the graviton 2-pt function in terms of relatively
simple scalar functions which are closely related to the 
Green's functions of the scalar Klein-Gordon equation.
It is convenient to employ a kind of K\"allen-Lehmann representation 
for these scalar functions (utilized previously by
\cite{Marolf:2010zp,Hollands:2010pr,Hollands:2011we,Marolf:2012kh}). 
We review this K\"allen-Lehmann representation now.

The maximally-symmetric Green's functions to the Klein-Gordon equation
were obtained in ancient times by Motolla \cite{Mottola:1984ar} and Allen 
\cite{Allen:1985ux}. Consider the function
\eq{ \label{eq:Delta}
  \Delta_\sigma(Z) := 
  \frac{1}{(4\pi)^{D/2}}
  \frac{\Gamma(-\sigma)\Gamma(\sigma+D-1)}
  {\Gamma\left(\frac{D}{2}\right)}
  \,{}_2F_1\hspace{-4pt}\left[-\sigma, \sigma+D-1 ; \frac{D}{2} ; 
    \frac{1+Z}{2} \right] ,
}
where ${}_2F_1[a,b;c;z]$ is the Gauss hypergeometric 
function \cite{Bateman:1955}. Eq. (\ref{eq:Delta}) defines
an analytic function of $Z$ in the complex $Z$ plane cut along 
$Z \in [1,\infty)$; it is also a meromorphic function of $\sigma$ with
simple poles at $\s = 0,1,2,\dots$ and $\s = -(D-1), -D, -(D+1),\dots$.
We obtain Green's functions by adding various $i\epsilon$ prescriptions
for avoiding the cut in the $Z$ plane. Relating the parameter $\sigma$ to 
the mass via
\eq{
  M^2 = M^2(\s) := -\s(\s+D-1) ,
}
the distributions $\Delta_\s(Z - i\epsilon)$ and 
$\Delta_\s(Z + i\epsilon s)$ satisfy
\eq{
  (\Box - M^2) \Delta_\s(Z - i\epsilon) = i\frac{\delta^D(x,\bx)}{\sqrt{-g}},
  \quad 
  (\Box - M^2) \Delta_\s(Z - i\epsilon s) = 0 ,
}
where $s = s(x,\bx)$ as in (\ref{eq:s}). 
The normalization of $\Delta_\s(Z)$ has been 
chosen such that for $M^2(\s) > 0$ these objects correspond to the 
time- and Wightman-ordered 2-pt functions of a canonically 
normalized massive scalar field on $dS_D$.\footnote{When
$M^2(\s)$ is not positive $\Delta_\s(Z-i\epsilon s)$ does not 
define a 2-pt function for Klein-Gordon fields on $dS_D$. 
For these values of the mass $\Delta_\s(Z-i\epsilon s)$
still satisfies the Klein-Gordon equation and is a Hadamard distribution,
but it fails to satisfy the positivity condition.}
$\Delta_\s(Z)$ is a scalar Hadamard distribution.

Perhaps the simplest example of a K\"allen-Lehmann representation
is that for $\Delta_\s(Z)$ itself \cite{Marolf:2010zp}:
\eq{ \label{eq:DeltaContour}
  \Delta_\s(Z) = \int_{C_\s} \frac{d\omega}{2\pi i}
  \frac{(2\omega+D-1)}{(\omega-\s)(\omega+\s+D-1)} \Delta_\omega(Z) .
}
The integration contour $C_\s$ is traversed from $-i\infty$ to $+i\infty$
within the strip $\Re \s < \Re \omega < 0$ -- see Fig.~\ref{fig:scalarContours}.
By acting with $(\Box - M^2)$ we obtain the identity
\eqn{ \label{eq:dirac}
  i\frac{\delta^D(x,\bx)}{\sqrt{-g}} 
  &=& \int_{C_\s} \frac{d\omega}{2\pi i}
  \frac{(2\omega+D-1)}{(\omega-\s)(\omega+\s+D-1)} 
  (\Box-M^2(\s))\Delta_\omega(Z-i\epsilon)
  \nn \\
  &=&
  \int_{C_\s} \frac{d\omega}{2\pi i}
  \frac{(2\omega+D-1)}{(\omega-\s)(\omega+\s+D-1)} 
  \bigg[ -(\omega-\s)(\omega+\s+D-1)\D_\omega(Z-i\epsilon) 
  \nn \\ & & \ph{  \int_{C_\s} \frac{d\omega}{2\pi i}
  \frac{(2\omega+D-1)}{(\omega-\s)(\omega+\s+D-1)} 
  \bigg[}
    + i\frac{\delta^D(x,\bx)}{\sqrt{-g}} \bigg]
  \nn \\
  &=&
  - \int_C \frac{d\omega}{2\pi i} 
  (2\omega+D-1) \Delta_\omega(Z-i\epsilon) .
}
In the second equality the second term on the right-hand side does
not contribute because for this term the integration contour may be
closed in the right half-plane without acquiring any residues.
In the final equality $C$ is a contour in the $\omega$ plane traversed
from $-i\infty$ to $+i\infty$ which crosses the real line 
within the strip $-(D-1) < \Re \omega < 0$ -- see 
Fig.~\ref{fig:scalarContours}.
By changing the cut prescription $i \epsilon \to i \epsilon s$ we deduce
\eq{ \label{eq:0}
  - \int_C \frac{d\omega}{2\pi i} 
  (2\omega+D-1) \Delta_\omega(Z-i\epsilon s)
  = 0 ,
}
which vanishes as a distribution. 

\begin{figure}
  \centering
  \includegraphics[width=0.3\textwidth]{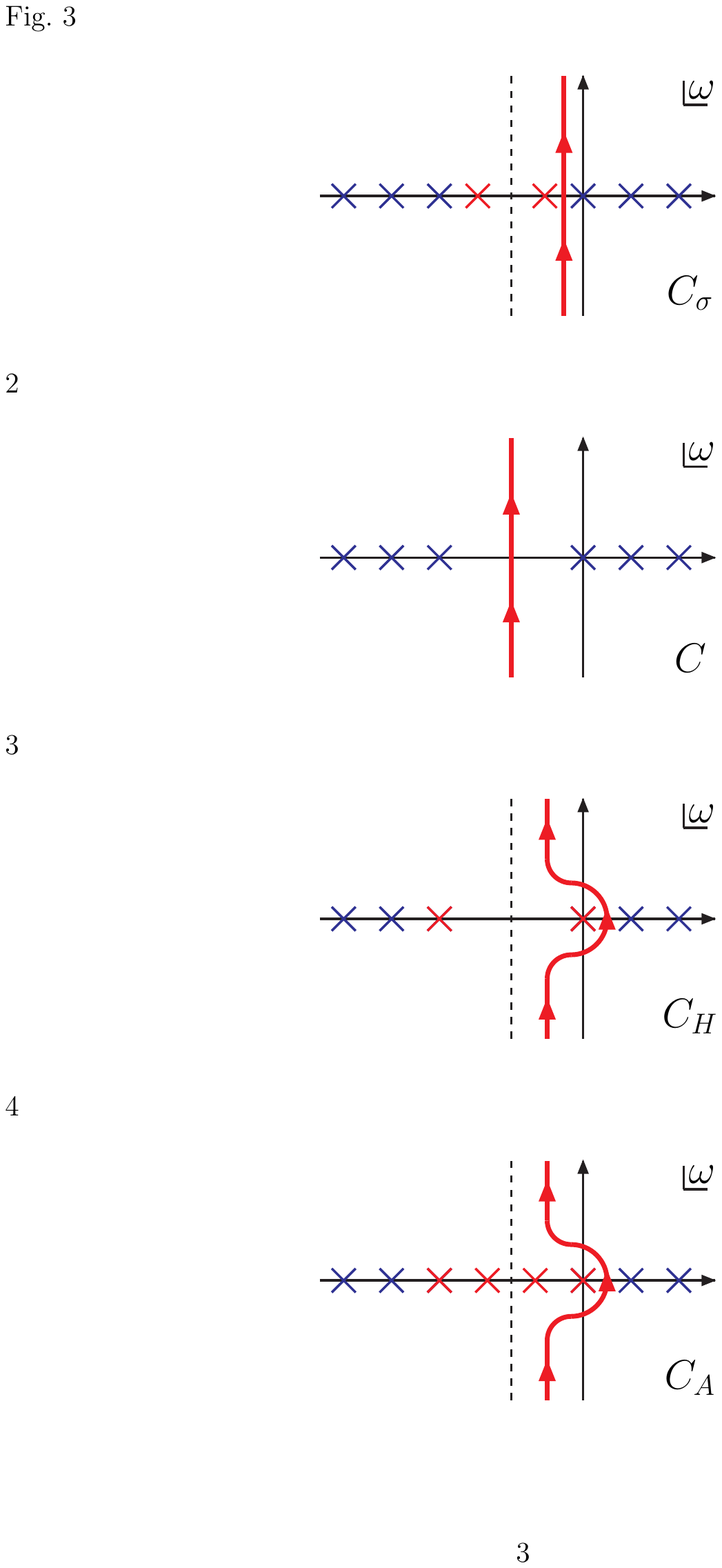}%
  \hspace{0.5cm}%
  \includegraphics[width=0.3\textwidth]{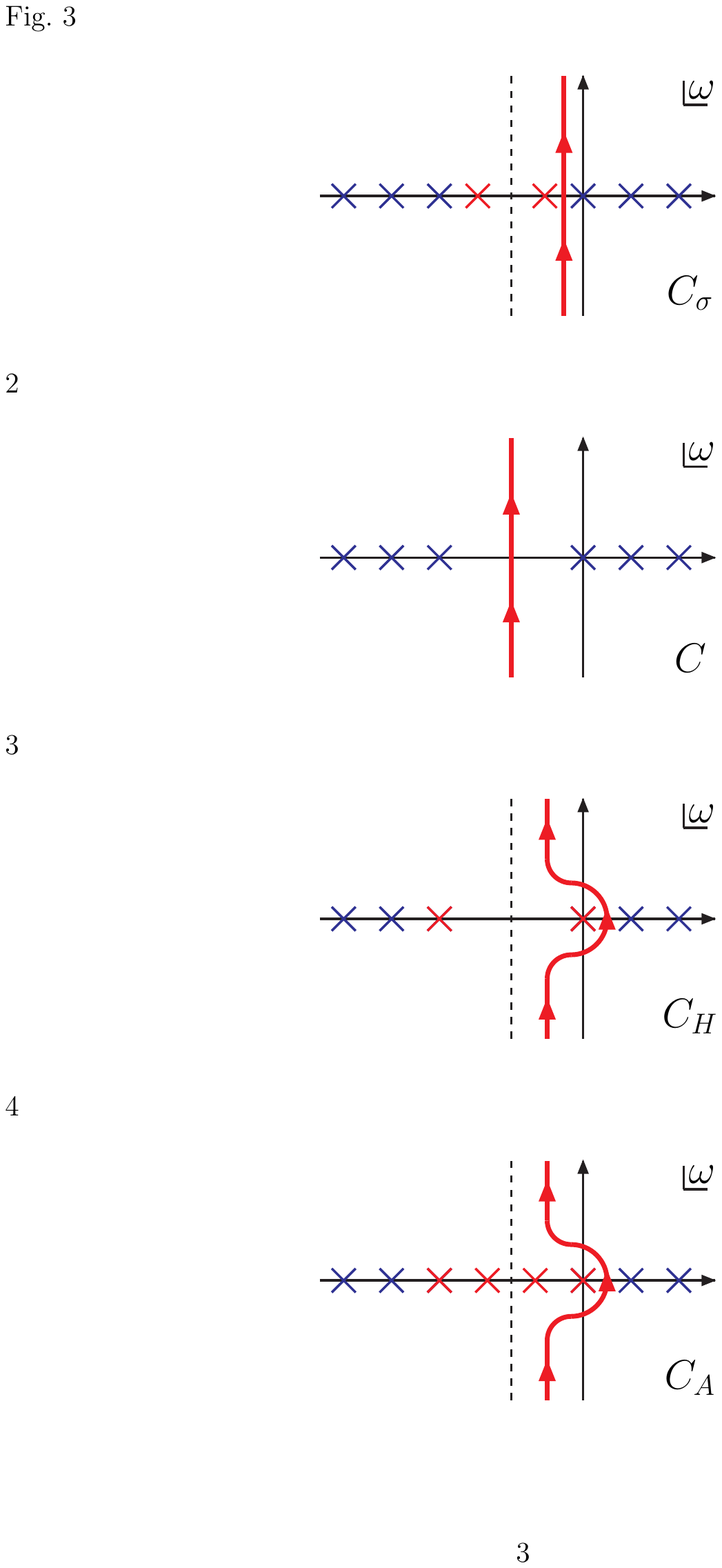}%
  \hspace{0.5cm}%
  \includegraphics[width=0.3\textwidth]{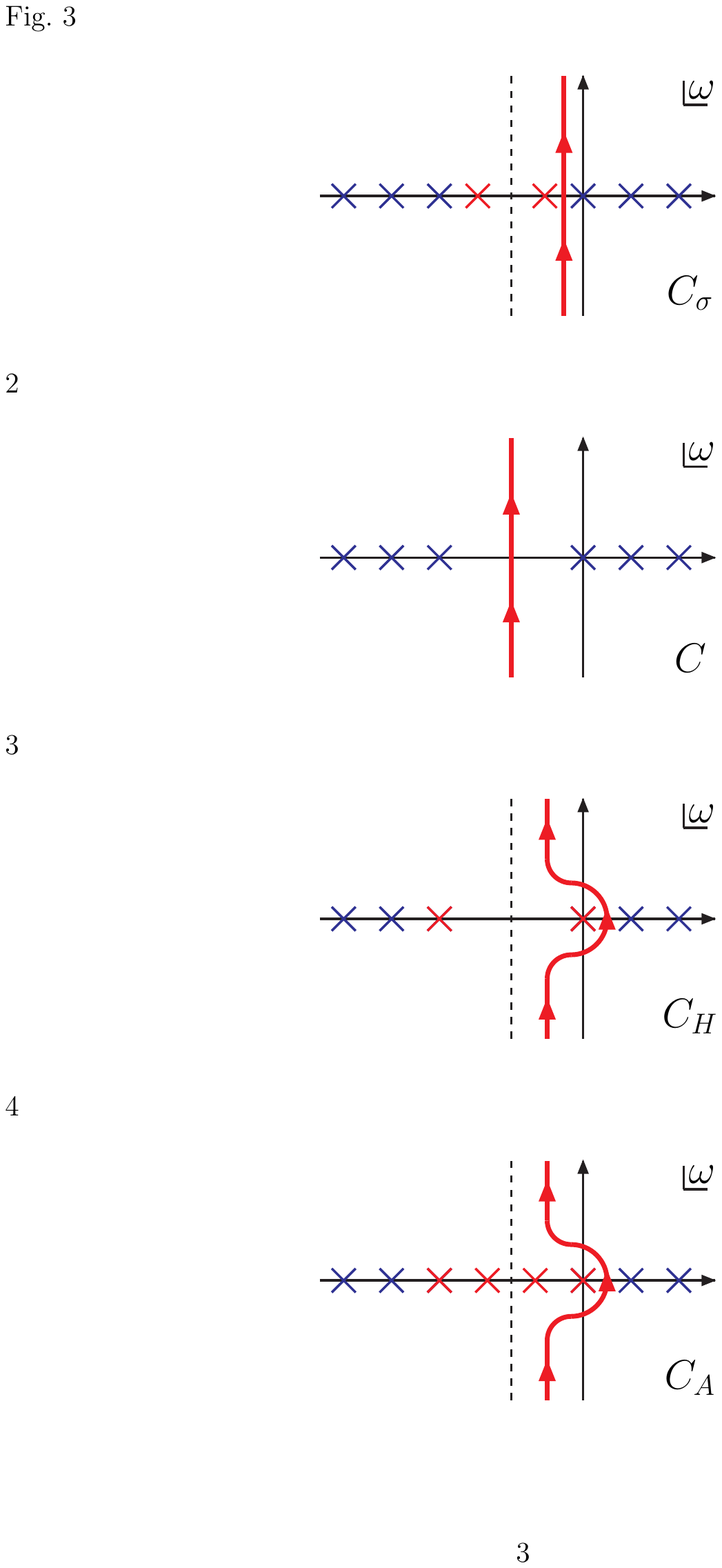}
  \caption{Integration contours in the complex $\omega$ plane.
    For each plot the dashed line denotes $\Re \omega = -(D-1)/2$,
    blue crosses denote the locations of singularities 
    in $\D_\omega(Z)$, and red crosses denote locations of other 
    singularities of the integrand.
    \label{fig:scalarContours}
  }
\end{figure}

It is well-known that there does not exist a maximally-symmetric solution
to the massless Klein-Gordon equation \cite{Allen:1987tz}.
This is due to the factor $\Gamma(\s)$ in (\ref{eq:Delta}): in the limit 
$M^2 \to 0$
\eq{
  \Delta_\sigma(Z) = \frac{1}{M^2 \vol} + \cO(M^0) , \quad M^2 \ll 1 ,
}
where $\vol = 2 \pi^{(D+1)/2}/\Gamma\left(\frac{D+1}{2}\right)$ 
is the volume of a unit $S^D$. Thus $\Delta_\s(Z)$ is undefined for $M^2 = 0$. 
Consider instead the maximally-symmetric solution to the equation 
\cite{Bros:2010aa}
\eq{
  (\Box - M^2) H_\sigma(Z-i\epsilon s) 
  = \frac{1}{\vol} .
}
The solution to this equation
is simply $H_\sigma(Z-i\epsilon s) = \Delta_\sigma(Z-i\epsilon s) 
- \frac{1}{M^2\vol}$.
Unlike $\Delta_\sigma(Z)$ the function $H_\s(Z)$ is regular in the 
neighborhood of $\sigma=0$, so $H_0(Z)$ exists and satisfies
\eq{
  \Box H_0(Z-i\epsilon s) = \frac{1}{\vol} .
}
Since $H_\sigma(Z)$ differs from $\D_\s(Z)$ only by an 
additive constant it is also a Hadamard distribution, and remains
so in the $\sigma \to 0$ limit.
$H_0(Z)$ may also be written as the limit
\eq{ \label{eq:Hlimit}
  H_0(Z) = \left[\frac{\d}{\d(M^2)} 
    \left( M^2 \Delta_\sigma(Z) \right) \right]_{M^2=0} ,
}
or may be written as a contour integral
\eq{ \label{eq:Hcontour}
  H_0(Z) = \int_{C_H} \frac{d\omega}{2\pi i} 
  \frac{(2\omega+D-1)}{\omega(\omega+D-1)}
  \Delta_\omega(Z) ,
}
where $C_H$ is a contour in the complex $\omega$ plane that is traversed
from $-i\infty$ to $+i\infty$ mostly in the left half-plane, but is
deformed so as to keep the pole at $\omega = 0$ to the left of the contour
-- see Fig.~\ref{fig:scalarContours}. To verify this expression, let
us act with the d'Alembertian:
\eqn{  \label{eq:boxH}
  \Box H_0(Z-i\epsilon s) 
  &=& \int_{C_H} \frac{d\omega}{2\pi i} 
  \frac{(2\omega+D-1)}{\omega(\omega+D-1)}  \Box \Delta_\omega(Z-i\epsilon s) 
  \nn \\
  &=& - \int_{C_H} \frac{d\omega}{2\pi i} 
  (2\omega+D-1) \Delta_\omega(Z-i\epsilon s) 
  \nn \\
  &=& - \int_C \frac{d\omega}{2\pi i} 
  (2\omega+D-1) \Delta_\omega(Z-i\epsilon s) 
  + \frac{1}{\vol}
  \nn \\
  &=& + \frac{1}{\vol} .
}
To obtain the third equality we deform the contour from $C_H$ to $C$;
along the way we pick up a residue from the simple pole at $\omega=0$
contained in $\Delta_\omega(Z)$.
The final equality follows from (\ref{eq:0}).
By changing the $i\epsilon$ prescription we obtain a distribution
which satisfies $\Box H_0(Z-i\epsilon) = \frac{1}{\vol} 
+ i \delta^D(x,\bx) / \sqrt{-g}$.

\subsection{Transverse traceless 2-pt function}
\label{sec:TT}

We are finally ready to combine our ingredients to compute
the 2-pt function $\CO{h_{\mu\nu}(x) h^{\bmu\bnu}(\bx)}$. 
In TT gauge this object satisfies the equation of motion
\eq{ \label{eq:TTEOM}
  \int_x \int_\bx f^{\mu\nu}(x) p_{\bmu\bnu}(\bx)
  \half \left(\Box - 2\right) 
  \C{ h_{\mu\nu}(x)h^{\bmu\bnu}(\bx)}_\Omega
  = 0 , \quad f,p \in \testTT .
}
The most general \emph{manifestly} maximally-symmetric ansatz 
for $\CO{h_{\mu\nu}(x) h^{\bmu\bnu}(\bx)}$ may be written
\eqn{ \label{eq:ansatz}
  \CO{h_{\mu\nu}(x) h^{\bmu\bnu}(\bx)}
  &=&
  \Delta_{\mu\nu}^{{\rm TT}\,\bmu\bnu}(Z - i \epsilon s) 
  \nn \\
  &:=&  \P_{\mu\nu}^{\phantom{\mu\nu}\a\b} 
  \P^{\bmu\bnu}_{\phantom{\bmu\bnu}\ba\bb} \left[
    A(Z-i \epsilon s) \T2{\a}{\b}{\ba}{\bb}
    \right] .
}
The term in brackets is consistent with the most general symmetric 
rank-2 MSBT (recall (\ref{eq:sly2}) and the fact that the projection
operators annihilate grad and metric terms).
This ansatz could be relaxed in several ways and still yield a 
maximally-symmetric state, but (\ref{eq:ansatz}) is the simplest and
will be sufficient for our purposes.

It is simple to obtain the equation of motion for $A(Z-i\epsilon s)$ 
from (\ref{eq:TTEOM}). First note that because the 
d'Alembertian commutes with the projection operators
\eq{ \label{eq:A1}
  \half \left(\Box - 2\right) 
  \Delta_{\mu\nu}^{{\rm TT}\,\bmu\bnu}(Z - i \epsilon s) 
  = \P_{\mu\nu}^{\phantom{\mu\nu}\a\b} 
  \P^{\bmu\bnu}_{\phantom{\bmu\bnu}\ba\bb} 
  \half \left(\Box - 2\right) \left[
    A(Z-i \epsilon s) \T2{\a}{\b}{\ba}{\bb}
  \right] .
}
Noting that for any bi-scalar $F(x,\bx)$
\eqn{ \label{eq:BoxOnSeed}
  \Box \left[ F(x,\bx) \T2{\a}{\b}{\ba}{\bb} \right] 
  &=& \left[(\Box+2) F(x,\bx) \right] \T2{\a}{\b}{\ba}{\bb} 
  \nn \\ & &
  + \grad + \met ,
}
we simplify (\ref{eq:A1}) to
\eqn{
  \half \left(\Box - 2\right) 
  \Delta_{\mu\nu}^{{\rm TT}\,\bmu\bnu}(Z - i \epsilon s) 
  &=& \P_{\mu\nu}^{\phantom{\mu\nu}\a\b} 
  \P^{\bmu\bnu}_{\phantom{\bmu\bnu}\ba\bb} 
  \left[ \half \Box A(Z- i \epsilon s) \right] \T2{\a}{\b}{\ba}{\bb} .
  \nn \\ & & 
}
Upon inserting this expression into (\ref{eq:TTEOM}), we 
use the self-adjointness of the projection operators (\ref{eq:Padjoint}),
as well as their action on TT tensors (\ref{eq:PonTT}), to obtain
\eqn{
  0 &=& \int_x \int_\bx f^{\mu\nu}(x) p_{\bmu\bnu}(\bx)
  \bigg\{
    \left[\frac{(D-3)^2}{8(D-2)^2} \Box^2 (\Box-(D-2))
      \bBox (\bBox-(D-2)) A(Z-i\epsilon s) \right] 
    \nn \\ & & \ph{\int_x \int_\bx f^{\mu\nu}(x) p_{\bmu\bnu}(\bx)
  \bigg\{}
      \T2{\mu}{\nu}{\bmu}{\bnu}
  \bigg\} 
}
In order for the right-hand side of this equality to vanish the
term in curly braces must be composed solely of grad and metric terms.
The most general such tensor with the index structure of 
$\T2{\mu}{\nu}{\bmu}{\bnu}$ is $(a + b Z) \T2{\mu}{\nu}{\bmu}{\bnu}$
with $a$ and $b$ arbitrary constants. Therefore we obtain the equation
of motion
\eq{ \label{eq:AEOM}
  \frac{(D-3)^2}{8(D-2)^2} \Box^3 (\Box-(D-2))^2 A(Z-i\epsilon s)
  = a + b Z ,
}
for arbitrary $a,b$.

We now show that there exists a maximally-symmetric solution 
to (\ref{eq:AEOM}) for $a = 1/\vol$ and for arbitrary $b$.
For the moment consider the case $b=0$; then the maximally-symmetric
solution is
\eq{ \label{eq:Acontour}
  A(Z-i\epsilon s) = \frac{8(D-2)^2}{(D-3)^2} \int_{C_A} \frac{d\omega}{2\pi i} 
  \frac{(2\omega+D-1)}{\omega^3(\omega+D-1)^3(\omega+1)^2(\omega+D-2)^2}
  \Delta_\omega(Z-i\epsilon s) .
}
where the integration contour $C_A:=C_H$ is the integration contour
used in (\ref{eq:Hcontour}) -- see also Fig.~\ref{fig:gravitonContours}.
The denominator of the integrand is simply the eigenvalue of 
$\D_\omega(Z-i\epsilon s)$ with respect to the derivative operator
on the left-hand side of (\ref{eq:AEOM}).
Let us explicitly verify that this is a solution to (\ref{eq:AEOM})
with $a=1/\vol$:
\eqn{
   & & \frac{(D-3)^2}{8(D-2)^2} \Box^3 (\Box-(D-2))^2 A(Z-i\epsilon s)
   \nn \\
   &=& \int_{C_A} \frac{d\omega}{2\pi i} 
  \frac{(2\omega+D-1)}{\omega^3(\omega+D-1)^3(\omega+1)^2(\omega+D-2)^2}
  \Box^3 (\Box-(D-2))^2 \Delta_\omega(Z-i\epsilon s) 
  \nn \\
   &=& - \int_{C_A} \frac{d\omega}{2\pi i} 
  (2\omega+D-1) \Delta_\omega(Z-i\epsilon s) 
  \nn \\
  &=& 
  {\rm Res}\left[ (2\omega+D-1) \Delta_\omega(Z-i\epsilon s) \right]_{\omega=0}
  - \int_{C} \frac{d\omega}{2\pi i} 
  (2\omega+D-1) \Delta_\omega(Z-i\epsilon s) 
  \nn \\
  &=& \frac{1}{\vol} .
}
To obtain the third equality we deform the integration contour from
$C_A$ to $C$ acquiring the residue of the pole at $\omega=0$ 
contained in $\D_\omega(Z)$ along the way. To obtain the final
equality we insert the value of this residue $1/\vol$ and
note that the remaining contour integral vanishes as a distribution
(recall (\ref{eq:0})).

We may also construct solutions to (\ref{eq:AEOM}) with $a=1/\vol$ and 
$b\neq 0$ by adding to (\ref{eq:Acontour}) a solution to (\ref{eq:AEOM})
with $a=0$ and $b\neq 0$.
Recalling that $\Box Z = - D Z$ we see that the solution to this equation 
is proportional to $Z$. However, the projection operators which act
on $A(Z-i\epsilon s)$ annihilate the term $Z \T2{\mu}{\nu}{\bmu}{\bnu}$ so this
additional term does not contribute to 
$\D^{{\rm TT}\bmu\bnu}_{\mu\nu}(Z-i\epsilon s)$.
Therefore the parameter $b$ represents a redundancy of description
introduced by our use of the projection operators; different choices of
$b$ yield the same 2-pt function $\Delta^{{\rm TT}\bmu\bnu}_{\mu\nu}(Z-i\epsilon s)$. 
Similarly, we may add to $A(Z-i\epsilon s)$ an arbitrary constant 
as the projection operators annihilate the term $\T2{\mu}{\nu}{\bmu}{\bnu}$. 
For simplicity we keep (\ref{eq:Acontour}) as our expression for 
$A(Z-i\epsilon s)$.\footnote{If we wish, we may also express $A(Z)$ 
  as a sum of terms of the form
  \eq{ \label{eq:Aseries}
    A(Z) = 
    \sum_{i=0}^4 c_i \left[\left(\frac{\d}{\d\s}\right)^i (M^2(\s) \D_\s(Z))\right]_{\s=0}
    + \sum_{j=0}^2 k_j \left[\left(\frac{\d}{\d\s}\right)^j \D_\s(Z)\right]_{\s=-1} .
  }
  To do so we deform the integration contour from $C_A$ to $C$, e.g. to $\Re \omega = -(D-1)/2$, along the way acquiring residues from the 
  poles at $\omega = 0$  and $\omega = -1$. These residues provide the
  terms (\ref{eq:Aseries}) and the remaining contour integral, 
  which is absolutely convergent, vanishes as the integrand is odd. 
  We will not need the coefficients in 
  (\ref{eq:Aseries}) so we do not bother to record them.
}

\begin{figure}
  \centering
  \includegraphics[width=0.3\textwidth]{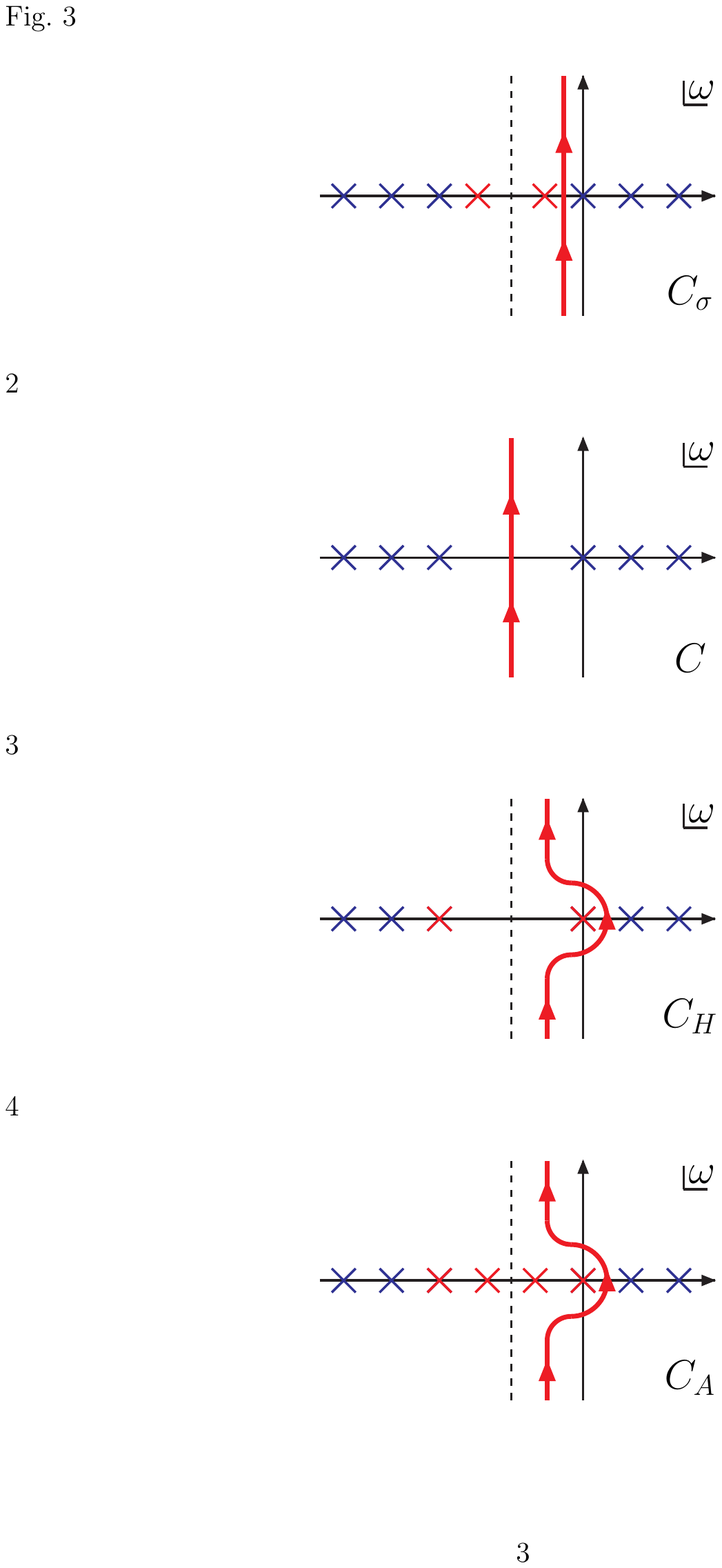}%
  \hspace{0.5cm}%
  \includegraphics[width=0.3\textwidth]{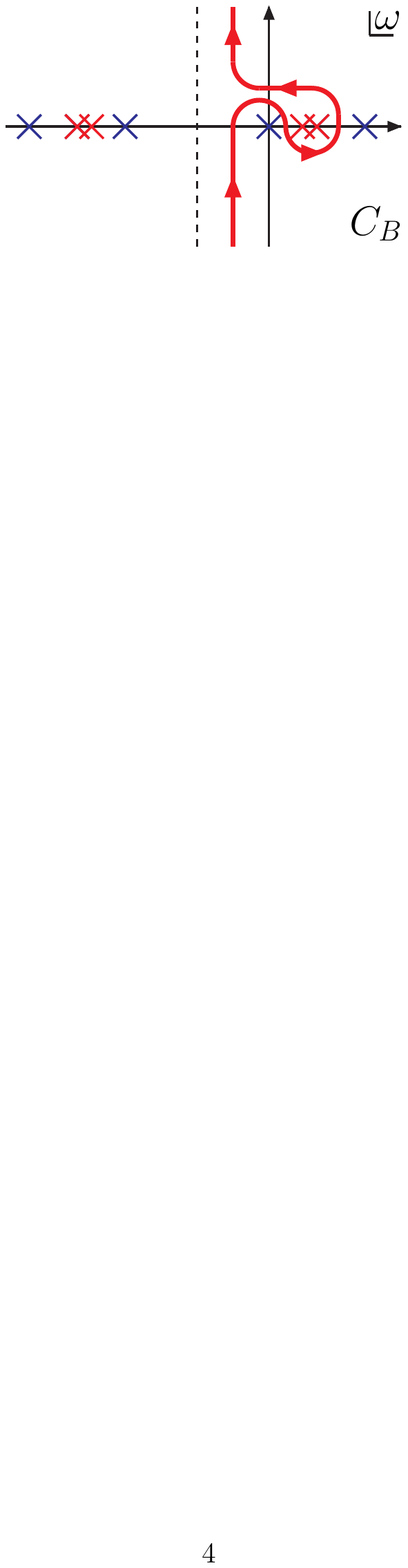}%
  \caption{Integration contours in complex $\omega$ plane.
    For each plot the dashed line denotes $\Re \omega = -(D-1)/2$,
    blue crosses denote the locations of singularities 
    in $\D_\omega(Z)$, and red crosses denote locations of other 
    singularities of the integrand.
    {\bf Left:} the contour $C_A=C_H$ utilized for $A(Z)$ and $U(Z)$. 
    {\bf Right:} the contour utilized for $B(Z)$ in
    the de Donder gauge term (see \S\ref{sec:deDonder}). The
    location of the poles at $\omega = \s_\b$, $-(\s_\b+D-1)$ depicted 
    correspond to transverse gauge $\beta=0$.
    \label{fig:gravitonContours}
  }
\end{figure}

The graviton 2-pt function simplifies considerably when smeared
against test functions and using this expression it is easy to 
verify that $\Delta^{{\rm TT}\bmu\bnu}_{\mu\nu}(Z-i\epsilon s)$ is 
positive and Hadamard. For $f,p \in \testTT$:
\eqn{ \label{eq:TTsmearing}
  \CO{h(f) h(p)} 
  &=& \int_x \int_\bx f^{\mu\nu}(x) p_{\bmu\bnu}(\bx)
  \Delta^{{\rm TT}\bmu\bnu}_{\mu\nu}(Z-i\epsilon s)
  \nn \\
  &=& 2 \int_x \int_\bx f^{\mu\nu}(x) p_{\bmu\bnu}(\bx)
  H_0(Z-i\epsilon s) \T2{\mu}{\nu}{\bmu}{\bnu} . \quad\quad
}
Here we have again utilized (\ref{eq:Padjoint}) and (\ref{eq:PonTT})
and we have also recognized $H_0(Z)$ (\ref{eq:Hcontour}).
By noting that
\eqn{
  \T2{\mu}{\nu}{\bmu}{\bnu} 
  &=& \frac{1}{4} 
  \nabla_\mu\nabla_\nu\nabla^\bmu\nabla^\bnu Z^2 + \met
  \nn \\
  &=& \frac{1}{4} \eta_{AC} \eta_{BD} 
  (\nabla_\mu\nabla_\nu X^A X^B)(\nabla^\bmu\nabla^\bnu \bX^A \bX^B)
  + \met ,\nn \\
}
where $\eta_{AB} = {\rm diag}\{-,+,\dots,+\}$ is the metric of the Minkowski 
embedding space,
we may use the chain rule to recast (\ref{eq:TTsmearing}) as
\eq{ \label{eq:TTsmearing2}
  \CO{h(f) h(p)} 
  = \half \eta_{AC} \eta_{BD} \int_x \int_\bx F^{AB}(x) P^{CD}(\bx)
  H_0(Z-i\epsilon s) 
}
with smearing functions
\eq{
  F^{AB}(x) := \nabla_\mu\nabla_\nu (f^{\mu\nu}(x) X^A X^B), 
  \quad
  P^{CD}(\bx) := \nabla^\bmu\nabla^\bnu (p_{\mu\nu}(x) \bX^C \bX^D) .
}
These test functions belong to the class
\eq{
  \mathscr{T}_{\rm S} := \left\{ f(x) \in C^\infty_0(dS_D) \; \bigg| \;
    \int_x f(x) = 0 \right\} .
}
Bros et.~al \cite{Bros:2010aa} have shown that $H_0(Z)$ is a positive 
kernel for this class of scalar test functions, so it follows 
that $\Delta^{{\rm TT}\bmu\bnu}_{\mu\nu}(Z-i\epsilon s)$ 
is a positive kernel for the test functions $\testTT$.
That $\D^{{\rm TT}\bmu\bnu}_{\mu\nu}(Z-i\epsilon s)$ is Hadamard
follows from the fact that $H_0(Z-i\epsilon s)$ is Hadamard.

The normalization of our 2-pt function has been fixed by examining 
the time-ordered correlation function $\CO{T h_{\mu\nu}(x) h^{\bmu\bnu}(\bx)}$
which is obtained by
changing the cut prescription:
$\CO{T h_{\mu\nu}(x) h^{\bmu\bnu}(\bx)} 
= \Delta^{{\rm TT}\bmu\bnu}_{\mu\nu}(Z-i \epsilon)$.
This object satisfies
\eq{ \label{eq:TorderedEOM}
  \int_x \int_\bx f^{\mu\nu}(x) p_{\bmu\bnu}(\bx)
  \half \left(\Box - 2\right) 
  \C{T h_{\mu\nu}(x)h^{\bmu\bnu}(\bx)}_\Omega
  = i \int_x f^{\mu\nu}(x) p_{\mu\nu}(x) , \quad f,p \in \testTT ,
}
or equivalently, the distribution equation
\eq{ \label{eq:TorderedEOM2}
  \half \left(\Box - 2\right) 
  \C{T h_{\mu\nu}(x)h^{\bmu\bnu}(\bx)}_\Omega 
  = i \delta^{{\rm TT}\bmu\bnu}_{\mu\nu}(x,\bx) ,
}
where $\delta^{{\rm TT}\bmu\bnu}_{\mu\nu}(x,\bx)$ is the identity
operator on the space of test functions $\testTT$. 
The identity operator satisfies
\eq{ \label{eq:deltaTT1}
  \int_x\int_\bx f^{\mu\nu}(x) p_{\bmu\bnu}(\bx)
  \delta^{{\rm TT}\bmu\bnu}_{\mu\nu}(x,\bx)
  = \int_x f^{\mu\nu}(x) p_{\mu\nu}(x) ,
  \quad f,p \in \testTT,
}
as well as
\eq{ \label{eq:deltaTT2}
  f_{\mu\nu}(x)
  =
  \int_\bx \delta^{{\rm TT}\bmu\bnu}_{\mu\nu}(x,\bx) q_{\bmu\bnu}(\bx)
  \in \testTT , \quad q \in C_0^\infty(T^0_2(dS_D)) .
}
As defined by its action on test function 
$\delta^{{\rm TT}\bmu\bnu}_{\mu\nu}(x,\bx)$
is maximally-symmetric, though it need not be written manifestly so.
Consider
\begin{proposition}
  The TT identity operator $\delta^{{\rm TT}\bmu\bnu}_{\mu\nu}(x,\bx)$
  which satisfies (\ref{eq:deltaTT1}) and (\ref{eq:deltaTT2})
  may be written
  {\normalfont $\ph{a}$
  \eq{
    \delta^{{\rm TT}\bmu\bnu}_{\mu\nu}(x,\bx)
    = \P_{\mu\nu}^{\phantom{\mu\nu}\a\b} 
    \P^{\bmu\bnu}_{\phantom{\bmu\bnu}\ba\bb} \left[ U(x,\bx) 
      \T2{\a}{\b}{\ba}{\bb} \right],
  }
  }
  where $U(x,\bx)$ is any solution to the equation
  {\normalfont $\ph{a}$
  \eq{ \label{eq:UEOM}
    \frac{(D-3)^2}{4(D-2)^2} \Box (\Box-(D-2))\bBox (\bBox-(D-2)) 
    U(x,\bx)
    = a + b Z + i \frac{\delta^D(x,\bx)}{\sqrt{-g}},
  }
  }
  with arbitrary coefficients $a$ and $b$.
\end{proposition}
The maximally-symmetric solution for $U(x,\bx)$
which satisfies (\ref{eq:UEOM}) with $a = 1/\vol,\,b=0,$ is 
given by
\eq{ \label{eq:U}
  U(Z-i\epsilon) = - \frac{4(D-2)^2}{(D-3)^2}
  \int_{C_A} \frac{d\omega}{2\pi i} 
  \frac{(2\omega+D-1)}{\omega^2(\omega+D-1)^2(\omega+1)^2(\omega+D-2)^2}
  \Delta_\omega(Z-i\epsilon) .
}
We obtain the normalization for the 2-pt function from this expression.

\subsection{Comparison with previous works}
\label{sec:compare}

We now discuss two points of contact between our result for the
graviton 2-pt function and existing results in the literature.
First, we note that our expression for the TT part of the 
2-pt function agrees with result obtained by
first constructing the 2-pt function on the Euclidean sphere $S^D$,
then analytically continuing the result to Lorentz-signature.
We show this in Appendix~\ref{app:Euclidean}.
This procedure has been used previously by many authors
\cite{Allen:1986dd,Floratos:1987aa,Higuchi:2000ge,Higuchi:2001uv,Park:2008ki},
but its validity for theories with massless fields 
has been debated \cite{Miao:2009hb,Miao:2010vs}.\footnote{For 
  interacting massive scalar 
  QFTs there is no debate: this procedure is been shown to agree 
  with Lorentz-signature constructions to all orders in perturbation
  theory \cite{Higuchi:2010aa,Korai:2012fi}.
  The massiveness of the fields is essential for the proof.
}
Since we have explicitly verified that our state is positive and
Hadamard it seems that the concerns of \cite{Miao:2009hb,Miao:2010vs} 
are not realized in this case.
One could have guessed this would be the case from the fact that the
Euclidean action for gravitons in TT gauge is bounded below, and
therefore the de Sitter version of the Osterwalder-Schrader 
theorem (see e.g. \cite{Schlingemann:1999mk}) would seem to assure
that the Lorentzian state defined by the analytically continued
Euclidean 2-pt function is free of pathologies (e.g., the Lorentzian
state is positive).

At first sight our results appear to be in conflict with those
\cite{Miao:2011fc}. This work claims that there does not exist a 
maximally-symmetric Hadamard solution to the TT part of the graviton 
2-pt Schwinger-Dyson equation.
The source of this tension is the fact that these authors
impose an additional restriction on the form of the 2-pt function:
they require that it admit a fourier transform with 
respect to the spatial coordinates in a Poincar\'e chart (\ref{eq:gPoincare})
which is convergent (in the sense of a function) in the limit where 
the momentum $\vec{k}\to 0$.
This requirement clearly forbids the possibility of any 
maximally-symmetric state whose 2-pt function increases in magnitude 
as $Z \to -\infty$ (infinite spatial separation), such as the
solution obtained above. As a result of this requirement on the
fourier transform \cite{Miao:2011fc} is required to consider a less 
symmetric ansatz for the function we call $A(Z)$ above.

We do not impose the restriction of \cite{Miao:2011fc} on fourier
transforms of the graviton 2-pt function, nor do we believe this restriction
is necessary to define a consistent quantum theory.
Obviously, this condition is unnatural if one considers the theory
on the global chart, where to impose this condition would require
specifying a preferred Poincar\'e chart, i.e. a preferred timelike direction.
Within the chosen chart this condition restricts the 
admissible class of gauge transformations 
$h_{\mu\nu}(x) \to h_{\mu\nu}(x) + 2\nabla_{(\mu}\xi_{\nu)}(x)$ 
to those for which $\nabla_{(\mu}\xi_{\nu)}(x)$ vanishes sufficiently rapidly 
near the spatial boundary.
As stated in \S\ref{sec:gauge}, we define our theory of linearized gravity
such that all smooth $\xi_\mu(x)$ may generate gauge transformations.

In the remainder of this section we analyse the relationship between
the maximally-symmetric state $\Omega$ constructed above and the state 
defined by the less-symmetric 2-pt function of \cite{Miao:2011fc}.
We will shortly conclude that within our framework the two states are
equivalent. This discussion is very technical and may not be of
interest to all readers.

Ref.~\cite{Miao:2011fc} utilizes the same ansatz 
(\ref{eq:ansatz}) but with $A(Z)$ replaced by a function
$\mathcal S_2(x,\bx)$ which is not assumed to be
maximally-symmetric.\footnote{Ref.~\cite{Miao:2011fc} uses
  the distance measure $y(x,\bx) = 2(1-Z(x,\bx))$, but for clarity
  we will continue to use $Z(x,\bx)$.}
These authors obtain the same equation of motion for $\mathcal S_2(x,\bx)$
(\ref{eq:AEOM}); however, they only consider the values $a=0,\, b=0$,
and as a result do not obtain the maximally-symmetric solution that
exists for $a = 1/\vol$. 
The solution of \cite{Miao:2011fc} is explored in detail in 
\cite{Kahya:2011sy} whose notation we now follow. 
The form of the solution is 
$\mathcal S_2(x,\bx) = S_2(Z) + \delta S_2(x,\bx)$,
where $S_2(Z) = A(Z)$ and $\delta S_2(x,\bx)$ is a symmetry-breaking
term which is a solution
to (\ref{eq:AEOM}) with $a = -1/\vol,\; b=0$. The authors choose
$\delta S_2(x,\bx)$ so as be invariant under spatial translations 
and rotations in the Poincar\'e chart; the explicit form of 
$\delta S_2(x,\bx)$ is given by eq.~(84) of \cite{Kahya:2011sy}.
Ref.~\cite{Kahya:2011sy} also showed that $\delta S_2(x,\bx)$ is not
annihilated by the projection operators in the ansatz (\ref{eq:ansatz}).

Let us denote the state defined by the 2-pt function of 
\cite{Miao:2011fc} by MTW (after the authors). To see that 
MTW is equivalent to $\Omega$ examine
the smeared 2-pt function $\C{h(f) h(p)}_{\rm MTW}$. Inserting
$\mathcal S_2(x,\bx)$ into this expression and utilizing 
the self-adjointness of the projection operators (\ref{eq:Padjoint})
as well as their action on TT tensors (\ref{eq:PonTT}) we
obtain
\eq{\label{eq:MTWsmearing1}
  \C{h(f) h(p)}_{\rm MTW} = \int_x \int_\bx f^{\mu\nu}(x) p_{\bmu\bnu}(\bx)
    \left[ H_0(Z-i\epsilon s) + Q(x,\bx) \right] \T2{\mu}{\nu}{\bmu}{\bnu}  ,
}
where $H(Z)$ is as in (\ref{eq:TTsmearing2}) and
\eqn{
  Q(x,\bx) &:=& \frac{(D-3)^2}{8(D-2)^2} \Box (\Box-(D-2)) \bBox (\bBox-(D-2))
  \delta S_2(x,\bx)
  \nn \\
  &=& - \frac{1}{(D-1)\vol} \left[ \ln \eta + \ln \etab\right] .
}
Here $\eta$ is the time coordinate in the Poincar\'e chart 
(\ref{eq:gPoincare}). This function satisfies
\eq{
  \Box Q(x,\bx) = \bBox Q(x,\bx) = -\frac{1}{\vol} ,
}
and as a result the term in brackets in (\ref{eq:MTWsmearing1})
is a Hadamard bi-solution to the massless Klein-Gordon equation.
By judicious use of the chain rule we show that $Q(x,\bx)$
does not contribute to (\ref{eq:MTWsmearing1}):
\eqn{
  \C{h(f) h(p)}_{\rm MTW}  - \CO{h(f) h(p)}
  &=& \int_x \int_\bx f^{\mu\nu}(x) p_{\bmu\bnu}(\bx) 
  Q(x,\bx) \T2{\mu}{\nu}{\bmu}{\bnu}
  \nn \\
  &=& \int_x \int_\bx f^{\mu\nu}(x) p_{\bmu\bnu}(\bx) 
   \left[ \nabla_{(\mu} \nabla^{(\bmu}  Q(x,\bx) \right] Z
   (\nabla_{\nu)}\nabla^{\bnu)} Z )
  \nn \\
  &=& 0 .
}
The final equality follows from the fact that
$\nabla_{\mu} \nabla^{\bmu}  Q(x,\bx) = 0$.
This computation shows that the states MTW and $\Omega$ are
equivalent as probed by \emph{all} local observables 
$\cA(dS_D)$ and thus are equivalent states within the 
algebraic framework.\footnote{Ref.~\cite{Mora:2012zh} has shown the
  the lesser condition that $\delta S_2(x,\bx)$ does not 
  contribute to the linearized Weyl tensor 2-pt 
function.}
This computation also shows that the state MTW admits a globally
Hadamard extension from the algebra of observables of the Poincar\'e
chart to the algebra of observables of the global chart.\footnote{
A related computation one could perform
is to check that 
$\C{T h_{\mu\nu}(x) h^{\bmu\bnu}(\bx)}_{\rm MTW}  
- \CO{T h_{\mu\nu}(x) h^{\bmu\bnu}(\bx)}$ is a homogeneous
solution to the equations of motion. Indeed
\eq{
  -\half(\Box-2)\left[ \C{T h_{\mu\nu}(x) h^{\bmu\bnu}(\bx)}_{\rm MTW}  
    - \CO{T h_{\mu\nu}(x) h^{\bmu\bnu}(\bx)}\right]
  = 0
}
holds as a distribution equation; that is, the left hand side vanishes
as tested by any functions in $\testTT$. }


\section{Consequences}
\label{sec:consequences}

The previous section examined the maximally-symmetric state $\Omega$
in great detail; in this section we use our results to establish
a few basic properties of \emph{generic} states.
Recall that we may connect the notion of an algebraic state with the 
more familiar notion of a vector in a Hilbert space via the
``GNS construction'': given an algebraic state $\Omega$ the GNS 
construction provides a Hilbert space $\cH_\Omega$ containing a cyclic
vector $\ket{\Omega}$, a representation of the abstract elements of 
the observable algebra (which we will not distinguish in notation),
as well as a dense set of state vectors $\cD \subset \cH_\Omega$ 
constructed by application of the observable algebra on $\ket{\Omega}$, i.e. 
\eq{
  \cD := \{ A \ket{\Omega} \;|\; A \in \cA(dS_D) \} .
}
For technical reasons in this section we restrict attention to 
observables whose support is contained in a \emph{contractible} region:
we call a compact subset $\cO \subset dS_D$ contractible if its boundary 
$\d \cO$ may be contracted to a point.
We denote the resulting algebra by $\cA_c(dS_D)$ and the dense set of states 
generated by $\cA_c(dS_D)$ on $\cH_\Omega$ by $\cD_c$:
\eq{
  \cD_c := \{ A \ket{\Omega} \;|\; A \in \cA_c(dS_D) \} .
}
Note that all compact co-dimension 1 surfaces 
in the Poincar\'e or static charts are contractible,
so for quantization on these backgrounds $\cA_c(dS_D) = \cA(dS_D)$
and $\cD_c = \cD$;
it is only on the global chart that $\cA_c(dS_D) \subset \cA(dS_D)$,
$\cD_c \subset \cD$.
We should emphasize that the set of states $\cD_c$ can approximate
\emph{any} state on $\cH_\Omega$ and are not limited to 
highly-symmetric or ``static'' configurations.

We will use the following lemma of Higuchi frequently \cite{Higuchi:2012vy}:
\begin{lemma} \label{lem:Higuchi}
  Every $f \in \testTT$ whose support is contained in a contractible
  region may be written
  \eq{ \label{eq:Higuchi}
    f^\s_{\ph{\s}\nu}(x) 
    = \Upsilon^{[\a\b]\,[\rho\s]}_{[\gamma\delta]\,[\mu\nu]} 
    \left( \nabla_\rho\nabla^\mu 
      + \delta^\mu_\rho \right) v_{\a\b}^{\ph{\a\b}\gamma\delta}(x),
  }
  where $\Upsilon^{[\a\b]\,[\rho\s]}_{[\gamma\delta]\,[\mu\nu]}$ is as in
  (\ref{eq:Upsilon}) and $v \in C^\infty_0(T^2_2(dS_D))$ is a test 
  tensor with the (anti-)symmetries and tracelessness of the Weyl tensor.
  The support of $v$ may be made arbitrarily close to that of 
  $f$.
\end{lemma}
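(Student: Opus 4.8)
The plan is to reduce the statement to a standard cohomological fact about the linearized Weyl operator. The right-hand side of (\ref{eq:Higuchi}) is, up to the fixed algebraic tensor $\Upsilon$, exactly the operator that produces the linearized Weyl tensor from a potential: schematically, if we set $W_{\gamma\mu\delta\nu} = C^{(1)}_{\gamma\mu\delta\nu}(v)$ built from a symmetric potential, then $\nabla^\gamma\nabla^\delta W_{\gamma\mu\delta\nu} = \P_{\mu\nu}{}^{\a\b} v_{\a\b}$, which by (\ref{eq:PonTT}) equals $-\tfrac12\tfrac{(D-3)}{(D-2)}(\Box-2)(\Box-D)$ acting on the TT part of $v$. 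So on TT tensors $\P$ is (up to a nonzero constant and a product of shifted d'Alembertians) invertible, the obstruction being only the kernel consisting of the $M^2=0$ and $M^2=(D-2)$ modes. The first step is therefore to show that for $f\in\testTT$ supported in a contractible region one can solve $(\Box-2)(\Box-D)w_{\mu\nu} = (\text{const})\, f_{\mu\nu}$ for a \emph{TT} tensor $w$ with support controllably close to that of $f$; this is where contractibility enters, since one must rule out the homogeneous solutions (the obstructions live in a de Rham-type cohomology that is killed on contractible sets). Then one writes $w$ as a source and runs the inhomogeneous Weyl equation backwards.

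Concretely I would proceed as follows. First, recall that the advanced/retarded Green's operators for the scalar-like operators $(\Box-2)$ and $(\Box-D)$ exist on the globally hyperbolic background, so one can solve $(\Box-2)(\Box-D)w=f$ by iterated convolution; the issue is preserving transversality and tracelessness and keeping the support compact and close to $\supp f$. Transversality and tracelessness are preserved because $(\Box-2)$ commutes with $\nabla^\mu$ and with the trace on symmetric rank-2 tensors in the appropriate sense (this is the standard Lichnerowicz-type identity on a maximally symmetric space, already implicit in \S\ref{sec:P}), so $\nabla^\mu w$ and $w^\mu{}_\mu$ satisfy homogeneous scalar/vector equations with compactly supported-in-time data that must vanish. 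Keeping $\supp w$ close to $\supp f$ requires a Poincar\'e-lemma / partition-of-unity argument on the contractible region: one chooses a slightly larger contractible neighborhood and uses that on it the relevant cohomology vanishes, so a compactly supported primitive exists. Second, having produced such a $w$, define $v_{\a\b}{}^{\gamma\delta}$ by building the linearized Weyl tensor of $w$ (or rather of a symmetric potential whose $\P$-image is $w$ after dividing by the constant), which automatically has the Weyl symmetries and tracelessness by (\ref{eq:Weyl})--(\ref{eq:Upsilon}); then (\ref{eq:Higuchi}) holds by construction, because contracting $v$ with $\Upsilon(\nabla\nabla+\delta)$ just reassembles $\P$ acting on the potential, which by (\ref{eq:PonTT}) gives back $f$ once the constants are matched. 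Actually, since we only need \emph{some} $v$, it is cleanest to take the potential for $\P$ to be $w$ itself rescaled: $f = \P[c\, w] = \Upsilon(\nabla\nabla+\delta) C^{(1)}(c\,w)/(\text{norm})$, so $v \propto C^{(1)}(w)$.

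The main obstacle is the support control combined with the kernel of $\P$ on TT tensors. Inverting $(\Box-2)(\Box-D)$ with Green's operators gives a $w$ with support in the causal shadow of $\supp f$, which is generally much larger than a small neighborhood of $\supp f$; to shrink it one must subtract homogeneous solutions, and that subtraction is possible precisely when the obstruction cohomology class vanishes — which is what contractibility buys. This is the technical heart, and it is essentially the content of Higuchi's original argument in \cite{Higuchi:2012vy}: on a contractible region the sequence (symmetric potentials) $\xrightarrow{\P}$ (TT tensors) $\xrightarrow{}$ (obstructions) is exact, with the obstruction space built from the $M^2=0,\,M^2=(D-2)$ eigenspaces, and these are trivialized by the same mechanism that makes closed forms exact on contractible sets. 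I would therefore structure the proof as: (i) solve the scalar-like equations with Green's operators, (ii) check TT is preserved, (iii) use a Poincar\'e-type lemma on an enlarged contractible neighborhood to correct the support, (iv) assemble $v$ from the linearized Weyl tensor of the corrected potential and match normalizations via (\ref{eq:PonTT}).

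Conversely, I should also verify the trivial direction — that any $v$ with the stated symmetries produces an $f$ in $\testTT$ — but that is immediate: $\Upsilon(\nabla_\rho\nabla^\mu+\delta^\mu_\rho)v$ is manifestly the double-divergence of a Weyl-symmetric object, hence symmetric, transverse, and traceless by the identities of \S\ref{sec:P}, and its support lies in that of $v$. So the real content, and the only place where anything can go wrong, is the existence-with-support-control statement, and there the one-line summary is: \emph{contractibility kills the cohomological obstruction to inverting $\P$ on compactly supported TT tensors.}
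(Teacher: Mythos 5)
Your existence argument breaks at its central reduction. You want a compactly supported TT tensor $w$, with support close to $\supp f$, solving $(\Box-2)(\Box-D)\,w_{\mu\nu}\propto f_{\mu\nu}$, and then you set $v\propto C^{(1)}(w)$. On a globally hyperbolic Lorentzian background $(\Box-2)(\Box-D)$ is a product of normally hyperbolic operators, and a compactly supported solution with compactly supported source $f$ exists if and only if the advanced and retarded solutions coincide, i.e.\ iff the causal (advanced-minus-retarded) propagator annihilates $f$. That is an infinite-codimension restriction which generic elements of $\testTT$ violate; it is a causal-support obstruction, not a cohomological one, so contractibility of $\supp f$ buys nothing here, and no subtraction of smooth homogeneous solutions can shrink the support of the retarded solution (indeed there are no compactly supported homogeneous solutions at all, which also shows the ``kernel at $M^2=0,\,M^2=D-2$'' you worry about is not the issue --- surjectivity is). There is a quick reductio: for TT $w$ one has $\P\,w \propto (\Box-2)(\Box-D)w \propto L^{(1)}\big((\Box-D)w\big)$, and $(\Box-D)w$ is again a compactly supported TT test tensor, so the equations-of-motion property of the observables, $h(L^{(1)}(q))=0$, would force $h(f)=0$ identically for \emph{every} $f\in\testTT$ with contractible support if your construction worked --- contradicting the nonvanishing commutator $[h(f),h(p)]=i\Delta(f,p)$ and the nonzero smeared 2-pt function of \S\ref{sec:TT}. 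In other words, the ansatz $v\propto C^{(1)}(w)$ with compactly supported $w$ reaches only the physically trivial $f$'s; the entire content of the lemma is that a general Weyl-symmetric $v$, \emph{not} one of the form $C^{(1)}(w)$, is needed.

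For comparison: the paper does not prove the lemma but quotes it from Higuchi \cite{Higuchi:2012vy}, whose argument is potential-theoretic rather than Green-function based. One solves the first-order problem directly, writing the transverse, traceless, compactly supported $f$ as the curvature-corrected double divergence of a compactly supported Weyl-symmetric tensor via a compact-support Poincar\'e-lemma (exact-sequence) argument --- the tensorial analogue of writing a compactly supported conserved current $j^\mu$ on a contractible region as $\nabla_\nu A^{\nu\mu}$ with $A^{\nu\mu}=-A^{\mu\nu}$ compactly supported. Contractibility enters exactly there, to kill the topological obstruction to compactly supported potentials, and the support of $v$ is controlled because the homotopy/integration operators act within a slightly enlarged neighborhood of $\supp f$; no hyperbolic operator is inverted anywhere. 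Your ``trivial direction'' check (that any Weyl-symmetric $v$ yields, through $\Upsilon(\nabla\nabla+\delta)$, a symmetric transverse traceless tensor supported in $\supp v$) is fine, but the existence direction must be rebuilt along these lines.
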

As pointed out by Higuchi \cite{Higuchi:2012vy}, it is an immediate 
consequence of this lemma that for every metric observable 
$h(f) \in \cA_c(dS_D)$ there exists an equivalent linearized Weyl tensor 
observable $C^{(1)}(v) \in \cA_c(dS_D)$ with $f$ and $v$ related as 
in (\ref{eq:Higuchi}).

\subsection{The Reeh-Schlieder theorem}
\label{sec:RS}

In this section we prove a version of the Reeh-Schlieder theorem
(theorem 5.3.1 of \cite{Haag:1992aa}) for linearized gravity on de Sitter
backgrounds:
\begin{theorem}\label{thm:RS}
  The set of states
  \eq{
    \cA_c(\cO)\Omega := \{ A \ket{\Omega} \; | \; A \in \cA_c(\cO) \} 
  }
  generated from operators in a contractible subset $\cO \subset dS_D$ 
  is dense on $\cD_c$. 
\end{theorem}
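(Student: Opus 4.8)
The plan is to mimic the classical Reeh-Schlieder argument from algebraic QFT, adapted to the gravitational setting. Suppose $\Phi \in \cD_c$ is orthogonal to $\cA_c(\cO)\Omega$; I want to show $\Phi$ is orthogonal to all of $\cD_c$, hence $\Phi = 0$. The key inputs are (i) the analyticity properties of correlation functions in $\Omega$ that follow from the Hadamard condition together with maximal symmetry, and (ii) the existence of a one-parameter subgroup of de Sitter isometries whose generator has a spectrum condition in the GNS representation of $\Omega$. Concretely, I would use a timelike (or boost-like) Killing flow $\phi_t$ generated by a Killing field $\xi$; since $\Omega$ is de Sitter-invariant it is invariant under $\phi_t$, so $\phi_t$ is implemented by a unitary $U(t)$ on $\cH_\Omega$ fixing $\ket{\Omega}$. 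The Hadamard property of the $\Omega$ 2-point function (verified in \S\ref{sec:TT}) ensures the requisite positivity of the energy-type generator $H$ with $U(t) = e^{itH}$ along the appropriate wedge, which is what makes the edge-of-the-wedge / analytic continuation argument go through.

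\textbf{Key steps.} First I would reduce to observables of a definite type: by Lemma~\ref{lem:Higuchi}, every metric observable $h(f) \in \cA_c(\cO)$ is equal to a linearized Weyl observable $C^{(1)}(v)$ with $v$ supported arbitrarily near $\supp f$, so it suffices to treat the algebra generated by smeared Weyl tensors (and their Wick powers). This is the gravity-specific ingredient that replaces the usual appeal to a local field: the Weyl observables are genuinely local and gauge-invariant, and they generate (a dense enough subalgebra of) $\cA_c(\cO)$. Second, given $A_1,\dots,A_n \in \cA_c(\cO)$ and the hypothesis $\langle \Phi | A_1 \cdots A_n \Omega\rangle = 0$, I would apply isometries: for $t_j$ in a neighborhood of $0$, $\phi_{t_j}(A_j) \in \cA_c(\cO_j)$ for slightly displaced contractible regions $\cO_j$, and the function $(t_1,\dots,t_n) \mapsto \langle \Phi | U(t_1) A_1 U(t_2-t_1) A_2 \cdots A_n \Omega \rangle$ extends analytically into a tube in $\Complex^n$ using the spectrum condition on $H$. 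Third, since this analytic function vanishes on a real neighborhood (as long as all displaced regions stay inside $\cO$, which requires $\cO$ to have nonempty interior — fine here), it vanishes identically on the connected tube, hence for all real $t_j$; varying the $t_j$ moves the supports of the $A_j$ throughout the domain swept out by the flow. Fourth, combining flows along enough independent Killing directions — and here the transitivity of the de Sitter group $SO(D,1)$ on $dS_D$, and on frames, is essential — the displaced observables $\phi_{t_1}(A_1)\cdots$ sweep out a dense subset of all of $\cA_c(dS_D)$ (the contractibility of $\cO$ guarantees the swept regions remain contractible, so we never leave $\cA_c$). Therefore $\Phi \perp \cD_c$, and by cyclicity of $\ket{\Omega}$ for $\cA_c(dS_D)$ (i.e.\ $\cD_c$ dense, which is the definition of the GNS data restricted to $\cA_c$) we conclude $\Phi=0$.

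\textbf{Main obstacle.} The delicate point is establishing the analytic continuation, i.e.\ that the smeared $\Omega$-correlators of Weyl observables are boundary values of functions holomorphic in a forward tube. In flat-space QFT this is the Wightman spectrum condition; here the analogue is the positivity of the generator of an appropriate Killing flow in the GNS representation of $\Omega$, which should follow from the Hadamard/microlocal spectrum condition on the 2-point function (the wave front set (\ref{eq:HadamardWF}) is precisely a ``positive frequency'' statement) together with quasi-freeness, propagated to $n$-point functions of Wick powers. One must be careful that the Weyl observables and their Wick powers are built from the Hadamard 2-point function so that these higher correlators inherit the microlocal bounds; with the explicit closed form of $\CO{h_{\mu\nu}(x)h^{\bmu\bnu}(\bx)}$ in hand this is checkable but technical. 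A secondary subtlety is that one must choose the Killing flow adapted to the chart: on the Poincar\'e or static charts a single globally-timelike (resp.\ boost) Killing field suffices and the argument is cleanest; on the global chart one must patch together several boost generators, using the $SO(D,1)$ homogeneity, and check that the union of swept regions is dense in $dS_D$ while respecting contractibility — this is why the theorem is phrased with $\cA_c$ rather than $\cA$. I expect the spectrum-condition/analyticity step to be where essentially all the real work lies; the rest is bookkeeping with isometries and Lemma~\ref{lem:Higuchi}.
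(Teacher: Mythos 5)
Your reduction to linearized Weyl observables via Lemma~\ref{lem:Higuchi} matches the paper's first step, but the engine of your argument --- forward-tube analyticity from a ``spectrum condition'' for a Killing generator --- has a genuine gap on de Sitter. There is no globally timelike Killing field on $dS_D$, and for the boost Killing field of the static chart the maximally symmetric state $\Omega$ is a KMS (thermal) state at the de Sitter temperature, so the generator implementing that flow in the GNS representation has spectrum unbounded above \emph{and} below. The Hadamard/microlocal condition on the 2-point function does not and cannot restore positivity of such a generator; it is a statement about wave front sets, not about the spectrum of any isometry generator in the invariant state. Consequently the tube $\{\Im t_j>0\}$ analyticity you invoke in your second and third steps is simply not available as stated, and the subsequent edge-of-the-wedge/sweeping argument has nothing to stand on. A repair exists, but it is not the one you propose: either one uses KMS strip-analyticity (\`a la Strohmaier's Reeh--Schlieder results for thermal states on stationary regions) together with a careful patching of wedges, or one uses the Bros--Epstein--Moschella ``weak spectral condition,'' i.e.\ analyticity of the correlators in tuboids of complexified de Sitter space --- the route the paper explicitly mentions as viable but does not take.

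The paper's actual proof sidesteps isometry flows entirely: after the reduction to Weyl correlators, Lemma~\ref{lem:WFAs} shows (using quasi-freeness and the explicit 2-point function) that the analytic wave front sets of the amplitudes $G^n_{\Psi\Omega}$ satisfy $\WFA(u)\cap-\WFA(u)=\emptyset$, and then Proposition~\ref{prop:wedge} (the microlocal unique-continuation theorem of Strohmaier--Verch--Wollenberg) upgrades ``vanishes on the open set $\cO$'' directly to ``vanishes on all of $(dS_D)^n$,'' with no sweeping by one-parameter subgroups and no spectrum condition. If you want to keep your global structure (orthogonal complement, analytic continuation, conclude $\Psi=0$), you must replace your analyticity input by one of the two correct mechanisms above; as written, the claim that positivity of an ``energy-type'' generator follows from the Hadamard condition is false, and it is precisely the step where, as you yourself note, all the real work lies.
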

The Reeh-Schlieder theorem has many interesting consequences 
\cite{Haag:1992aa,Streater:1989vi}, 
but our immediate purpose for introducing this theorem is
that it allows us to describe any state in $\cD_c$ by a state in 
$\cA_c(\cO)\Omega$,
a feature we will need in \S\ref{sec:hair}.
We note that for quantum fields of spin less than 2
the Reeh-Schlieder theorem has been proven, under various circumstances,
on a number of curved backgrounds -- see e.g. \cite{Schlieder:1965aa,Borchers:1998nw,Jaekel:2000tk,Strohmaier:2000aa,Strohmaier:2002aa,Sanders:2009aa,Dappiaggi:2011fk}.

The Reeh-Schlieder theorem is often viewed as a statement about 
the analyticity properties of the n-pt functions of state vectors 
in $\cD_c$.
For scalar QFTs on Minkowski space the Reeh-Schlieder theorem may be proven by
showing that n-pt functions are holomorphic on a sufficiently large
domain in complex Minkowski space such that 
determining their value on an open region of the real Lorentzian section
determines the n-pt functions everywhere on the section.
There is the subtlety that the domain of holomorphicity does not
actually contain the real Lorentzian section -- there n-pt functions are
distributions, i.e. ``boundary values of holomorphic functions'' -- 
but this is accommodated for by the ``edge-of-the-wedge'' theorem 
\cite{Streater:1989vi}.
For scalar QFTs on de Sitter there exists a rather straight-forward 
extension of this proof by Bros et.~al.~\cite{Bros:1998ik}
which uses the holomorphicity properties of scalar n-pt functions on
complexified de Sitter space (or equivalently, the complexified 
embedding space). In our expressions for the graviton 2-pt function 
obtained in \S\ref{sec:TT} the functions $A(Z)$ and $H_0(Z)$
have the same domain of holomorphicity as the 2-pt functions of
Klein-Gordon fields on $dS_D$,\footnote{In particular, they satisfy
the ``weak spectral condition'' of \cite{Bros:1998ik}. This 
is not surprising given that both $A(Z)$ and $H_0(Z)$ are 
closely related to the Klein-Gordon 2-pt function $\D_\sigma(Z)$.}
so it appears that it would be rather simple
to extend the proof of \cite{Bros:1998ik} to linearized gravity.
Ultimately we chose another path for our proof, but we mention
this tactic because it provides a nice geometric description.

A faster if less picturesque approach to proving the Reeh-Schlieder theorem 
is to employ the tools of microlocal analysis. In particular, the 
analyticity properties of a distribution may be characterized by its 
analytic wave front set ($\WFA$).\footnote{A mathematical introduction to 
  this technology may be found in Ch.~9 
  of \cite{Hormander:1990aa}. For introductions to the applications
  to QFT see \cite{Brunetti:1995rf,Verch:1999aa}.}
In this language, one expects the Reeh-Schlieder theorem to hold 
on a dense set of states whose correlation functions have analytic wave 
front sets that are ``sufficiently small,'' indicating a high degree of 
analyticity.
There exists a powerful microlocal version of the edge-of-the-wedge 
theorem, namely Proposition 5.3 of \cite{Strohmaier:2002aa}:
\begin{proposition}\label{prop:wedge}
  Let $M$ be a real analytic connected manifold and $u \in D'(M)$ 
  a distribution with the property that
  \eq{ \label{eq:wedge}
    \WFA(u) \cap - \WFA(u) = \emptyset. 
  }
  Then for each non-void open subset $\cO \subset M$
  if the restriction of $u$ to $\cO$ vanishes then $u = 0$.
\end{proposition}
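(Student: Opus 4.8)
The plan is to treat this as a unique-continuation statement and prove it by the classical ``sweeping hypersurface'' technique, reducing the global conclusion to a local analytic/microlocal uniqueness theorem applied at a boundary point of $\supp u$. First I would set $Z := M \setminus \supp u$. By hypothesis $\cO \subset Z$, so $Z$ is a nonempty open subset of $M$; since $M$ is connected it is enough to show that $\supp u$ is \emph{also} open, for then $Z$ is clopen and nonempty, whence $Z = M$ and $u = 0$. Equivalently, I must rule out the existence of a point $x_0 \in M$ lying on the boundary of $\supp u$.

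So suppose such an $x_0$ exists. I would fix a real-analytic coordinate chart around $x_0$ --- available because $M$ is real analytic --- and identify a neighborhood of $x_0$ with an open subset of $\Reals^n$, in which coordinate spheres are genuine real-analytic hypersurfaces of $M$. Since $x_0$ is a limit of points of the open set $Z$, choose $y \in Z$ with $|y - x_0|$ small enough that $\overline{B(y,|y-x_0|)}$ lies in the chart, and set $R := \operatorname{dist}(y,\supp u) \in (0,\,|y-x_0|]$. Then $B(y,R) \subset Z$ while the sphere $S := \partial B(y,R)$ meets $\supp u$; fix $z_0 \in S \cap \supp u$. Near $z_0$ the distribution $u$ vanishes on the side $\{\,|x-y|<R\,\}$ of the real-analytic hypersurface $S$, whose conormal at $z_0$ is the nonzero covector $\nu \propto d\bigl(|x-y|^2\bigr)\big|_{z_0}$.

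At this point I would invoke the analytic version of Holmgren's uniqueness theorem in its microlocal (H\"ormander) form, together with the edge-of-the-wedge theorem (see \S8.5 of \cite{Hormander:1990aa}): if $u$ vanishes on one side of a $C^1$ hypersurface through $z_0$ with conormal $\nu$, then either $(z_0,\nu)$ and $(z_0,-\nu)$ both lie in $\WFA(u)$, or neither does and $u$ vanishes in a full neighborhood of $z_0$. The second alternative is impossible since $z_0 \in \supp u$; hence $(z_0,\pm\nu) \in \WFA(u)$, so $\WFA(u) \cap -\WFA(u) \neq \emptyset$, contradicting (\ref{eq:wedge}). Therefore $\supp u$ has no boundary point in $M$: it is clopen, and since it omits the nonvoid open set $\cO$ it is empty, i.e.\ $u = 0$.

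The only substantive ingredient, and the step I expect to be the real obstacle, is the local analytic-unique-continuation input of the last paragraph: this is where the edge-of-the-wedge theorem and Schwarz reflection for boundary values of holomorphic functions enter (vanishing on one side gives a one-sided holomorphic continuation across $S$; micro-analyticity in a conormal direction would, by reflection, continue it from the other side as well, forcing genuine analyticity at $z_0$), and it is the only place where real-analyticity of $M$ --- rather than mere smoothness --- is used, the smooth-$\WF$ analogue of the proposition being false. A minor point to be careful about is to keep the entire ball-growing construction inside one real-analytic chart so that the auxiliary spheres are bona fide real-analytic hypersurfaces; since the contradiction is derived locally near $z_0$ this is harmless.
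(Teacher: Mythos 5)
Your argument is correct, but note that the paper itself gives no proof of this proposition: it is quoted verbatim as Proposition 5.3 of \cite{Strohmaier:2002aa}. Your connectedness-plus-largest-tangent-ball argument, with the dichotomy at a boundary point of $\supp u$ supplied by H\"ormander's Theorem 8.5.6$'$ (Kashiwara's result that vanishing on a tangent ball forces \emph{both} conormal directions $\pm\nu$ into $\WFA(u)$ at a point of the support), is essentially the proof given in that cited reference, so it matches the intended argument; the only points to tidy are that the distance $R$ should be taken to $\supp u$ intersected with the fixed analytic chart (attainment then following from compactness of the closed ball), and that the key lemma should be invoked in its tangent-ball/$C^2$ form — which is all you actually use for the sphere $S$ — rather than asserted for arbitrary $C^1$ hypersurfaces.
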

Using this proposition \cite{Strohmaier:2002aa} proved the Reeh-Schlieder
theorem for scalar QFTs on real analytic spacetimes whose state
n-pt functions satisfy (\ref{eq:wedge}).
Below we show that the analytic wave front sets of linearized Weyl 
tensor n-pt functions of states in $\cD_c$ also satisfy (\ref{eq:wedge}). 
Following \cite{Strohmaier:2002aa} we may then quickly prove the 
Reeh-Schlieder theorem for linearized gravitons on $dS_D$.

\begin{lemma}\label{lem:WFAs}
  Consider two state vectors $\ket{\Psi},\;\ket{\Theta} \in \cD_c$.
  The analytic wave front set of the amplitudes
  \eq{\label{eq:Gamp}
    G^n_{\Psi\Theta} :=
    \bra{\Psi} 
    C^{(1)\a_1}_{\ph{(1)\a_1}\b_1 \gamma_1 \delta_1}(x_1)\dots
    C^{(1)\a_n}_{\ph{(1)\a_n}\b_n\gamma_n \delta_n}(x_n) 
    \ket{\Theta} 
  }
  satisfy (\ref{eq:wedge}).
\end{lemma}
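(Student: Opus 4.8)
\proof
The plan is to follow the microlocal strategy that Strohmaier, Verch and Wollenberg \cite{Strohmaier:2002aa} applied to scalar fields: establish an analytic microlocal spectrum condition for the amplitudes $G^n_{\Psi\Theta}$ and then read off (\ref{eq:wedge}). The first step is to reduce the claim to a statement about the 2-pt function of $\Omega$. Since $\Omega$ is a quasi-free state of the free linearized theory, and since every vector in $\cD_c$ is obtained by applying to $\ket{\Omega}$ a polynomial in smeared gauge-invariant observables -- which, by Lemma \ref{lem:Higuchi}, may be taken to be polynomials in smeared linearized Weyl tensors $C^{(1)}(v)$ with $v$ of compact, contractible support -- the amplitude $G^n_{\Psi\Theta}$ equals an $\Omega$-expectation value of a product of smeared and unsmeared linearized Weyl operators. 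Wick's theorem then exhibits $G^n_{\Psi\Theta}$ as a finite sum of products of 2-pt functions $\CO{C^{(1)}(y)\, C^{(1)}(y')}$, each argument being one of the points $x_i$ or one of the fixed, compactly supported test functions defining $\ket{\Psi}$ and $\ket{\Theta}$.

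The second step is to pin down $\WFA$ of the Weyl 2-pt function, and it is here that phrasing the lemma in terms of $C^{(1)}$ rather than $h_{\mu\nu}$ pays off: although $\CO{h_{\mu\nu}(x) h^{\bmu\bnu}(\bx)}$ is only defined modulo gauge, the Weyl 2-pt function is an honest, maximally-symmetric bi-tensor obtained by acting on it with the differential operator (\ref{eq:Weyl}), whose coefficients are real-analytic on the real-analytic manifold $dS_D$. A partial differential operator with real-analytic coefficients never enlarges the analytic wave front set, and by \S\ref{sec:TT} the metric 2-pt function is produced from the scalar functions $A(Z - i\epsilon s)$ and $H_0(Z - i\epsilon s)$ by acting with the projection operators of \S\ref{sec:P} (again differential operators with real-analytic coefficients) and multiplying by real-analytic bi-tensors; so it suffices to control the analytic wave front set of $A$ and $H_0$. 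These functions are holomorphic in the same cut domain of complexified $dS_D \times dS_D$ as the Bunch-Davies Klein-Gordon 2-pt function $\Delta_\sigma(Z - i\epsilon s)$ -- in particular they satisfy the weak spectral condition of \cite{Bros:1998ik} -- so their analytic singular support is the set of null-separated pairs (the only branch point of $\Delta_\sigma(Z)$ is at $Z = 1$; note that no extra singularity occurs at the antipodal configuration $Z = -1$) and their analytic wave front set lies in the ``positive frequency'' set $\mathcal{N}^+ := \{ (x,k;\bx,-k') \;|\; (x,k) \sim (\bx,k'),\ k \in \overline{V^+_x} \setminus \{0\} \}$, in the notation of (\ref{eq:HadamardWF}). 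Consequently $\WFA(\CO{C^{(1)}(x) C^{(1)}(\bx)}) \subseteq \mathcal{N}^+$ as well.

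The third step assembles these facts using H\"ormander's wave front set calculus \cite{Hormander:1990aa}. A 2-pt function with at least one slot smeared against a compactly supported smooth function is real-analytic, because no element of $\mathcal{N}^+$ has a vanishing covector in either slot; hence in the Wick expansion only the 2-pt functions both of whose arguments lie among $x_1,\dots,x_n$ contribute to $\WFA(G^n_{\Psi\Theta})$. The remaining products are controlled by the analytic version of the rule for wave front sets of products of distributions -- whose non-degeneracy hypothesis is supplied by the one-sidedness of $\mathcal{N}^+$ -- and one obtains: if $(x_1,k_1;\dots;x_n,k_n) \in \WFA(G^n_{\Psi\Theta})$, then there is a partial pairing of the points in which each paired $k_i$ is a single null covector coparallel to the connecting null geodesic, future-directed for the earlier point of the pair and past-directed for the later one, while $k_i = 0$ for the unpaired points. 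Exactly as in \cite{Strohmaier:2002aa}, this forbids $\WFA(G^n_{\Psi\Theta})$ and $-\WFA(G^n_{\Psi\Theta})$ from sharing a point: the smallest index $i$ with $k_i \neq 0$ would, via one pairing, carry a future-directed null covector and, via the pairing realizing the negated point, a past-directed one -- impossible for a nonzero covector. This establishes (\ref{eq:wedge}), the hypothesis needed in Proposition \ref{prop:wedge} to obtain Theorem \ref{thm:RS}.

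The main obstacle, I expect, is the second step -- rigorously establishing the analytic Hadamard form of the Weyl 2-pt function -- rather than the combinatorics of the third, which is routine given \cite{Strohmaier:2002aa} once the 2-pt function is known to be analytic Hadamard and $\Omega$ quasi-free. The genuinely new input for linearized gravity is the observation that passing to the gauge-invariant Weyl correlators removes all gauge ambiguity while the operator (\ref{eq:Weyl}) relating $C^{(1)}$ to $h$ has real-analytic coefficients, so that the holomorphy domain of $A(Z)$ and $H_0(Z)$ -- inherited from the scalar theory and already identified in \S\ref{sec:TT} -- directly controls $\WFA(G^n_{\Psi\Theta})$.
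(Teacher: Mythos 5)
Your proposal is correct and follows essentially the same route as the paper's proof: exploit quasi-freeness of $\Omega$ to reduce everything to 2-pt functions, control their analytic wave front set through the Hadamard/holomorphy properties of $H_0(Z)$ and $A(Z)$ (the Bros-type weak spectral condition noted in \S\ref{sec:RS}), use that the Weyl operator (\ref{eq:Weyl}) and the projectors are differential operators with analytic coefficients and so cannot enlarge $\WFA$, and conclude (\ref{eq:wedge}) from the one-sided ``positive frequency'' structure of the pairings. The only differences are cosmetic (you bound $\WFA$ of the Weyl 2-pt function first and then Wick-expand, whereas the paper Wick-expands the metric n-pt functions of $\Omega$ first and then applies the Weyl operator; your treatment of the smeared slots is no less careful than the paper's).
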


\begin{proof}
We begin by determining the analytic wave front set of the
graviton n-pt functions of $\Omega$:
\eq{
  F^n_{\Omega\Omega} 
  := \CO{h_{\mu_1 \nu_1}(x_1)\dots h_{\mu_n \nu_n}(x_n)} .
}
Since $\Omega$ is quasi-free 
$F^n_{\Omega\Omega}$ vanishes for $n$ odd while for $n$ even
\eq{ \label{eq:Feven}
  F^n_{\Omega\Omega}
  = \sum_{P} \prod_{r\in P}
  \CO{ h_{\mu_{r1}\nu_{r1}}(r_1) h_{\mu_{r2}\nu_{r2}}(r_2) } .
}
Here $P$ denotes the a partition of the set $\{x_1,\dots,x_n\}$
into pairs of points labelled $r = (r_1, r_2)$; the points are
labelled so as to preserve the Wightman operator ordering.
The analytic wave front set of the 2-pt function is
easily determined from that of $H_0(Z)$:
\eqn{
  & & \Xi(x_1,k_1 ; x_2, k_2) := \WFA\left( F^2_{\Omega\Omega} \right)
  \nn \\ & &= \{ (x_1, k_1 ; x_2, -k_2) \in (T^*dS_D \setminus \{0\})^2 \; | 
  \; (x_1,k_1)\sim (x_2,k_2),
  \; k_1 \in V_1^+ \} . \quad
}
Here we have once again used the notation that
$(x_1,k_1)\sim (x_2,k_2)$ denotes that $x_1$ and $x_2$ may
be joined by a null geodesic $k_1$ and $k_2$ are cotangent and
coparallel to that geodesic.
Then the analytic wave front set of $F^n_{\Omega\Omega}$ is 
\eq{
  \WFA\left(F^n_{\Omega\Omega}\right)
  = \cup_P \oplus_{r\in P} \Xi(x_{r1},k_{r1} ; x_{r2}, k_{r2}) .
}
Note that $\WFA\left(F^n_{\Omega\Omega}\right)$ satisfy (\ref{eq:wedge}).

Next we consider the linearized Weyl tensor n-pt functions $G^n_{\Omega\Omega}$
which may be constructed from the $F^n_{\Omega\Omega}$ by repeated use of
(\ref{eq:Weyl}).
Using the basic facts that i) $\WFA( \d u) \subseteq \WFA( u )$,
and ii) $\WFA(f u) \subseteq \WFA(u)$ for $f \in C^\infty$
we readily determine that
\eq{
  \WFA\left(G^n_{\Omega\Omega}\right)
  \subseteq \WFA\left(F^n_{\Omega\Omega}\right)
  = \cup_P \oplus_{r\in P} \Xi(x_{r1},k_{r1} ; x_{r2}, k_{r2}) .
}
Thus $\WFA\left(G^n_{\Omega\Omega}\right)$ satisfy (\ref{eq:wedge}).

States in $\cD_c$ are constructed by acting on $\Omega$ with
finite polynomials of the linearized Weyl tensor smeared by contractible 
$C_0^\infty$ test functions. Thus the amplitude $G^n_{\Psi\Theta}$ 
are a sum of a finite number of $G^m_{\Omega\Omega}$ with $n \le m < \infty$,
and it follows that $\WF( G^n_{\Psi\Theta})$ satisfy (\ref{eq:wedge}). $\blacksquare$
\end{proof}\\

\noindent {\it Proof of Theorem \ref{thm:RS}.}
    Consider the amplitudes $\bra{\Psi}A \ket{\Omega}$
    with $A \in \cA_c(\cO)$ and $\cO$ a contractible region on $dS_D$.
    Due to Lemma~\ref{lem:Higuchi} we may restrict attention to the
    linearized Weyl tensor correlators
    \eq{ \label{eq:RSC}
      \bra{\Psi} C^{(1)}(v_1)\dots C^{(1)}(v_n) \ket{\Omega}, 
      \quad {\rm supp} \,v_i \subseteq \cO .
    }
    Let $\ket{\Psi}$ be in the orthogonal complement to $\cA_c(\cO)\Omega$ so
    that (\ref{eq:RSC}) vanishes for all $v_i$.
    It follows that the restriction of each $G^n_{\Psi\Omega}$ to $\cO$ vanishes.
    Due to Lemma~\ref{lem:WFAs} we may use Proposition~\ref{prop:wedge} to 
    conclude that each $G^n_{\Psi\Omega} = 0$ on all of $(dS_D)^n$. Then all 
    correlation functions of $\Psi$ vanish and hence $\Psi = 0$, i.e. the 
    orthogonal complement of $\cA_c(\cO)\Omega$ is empty. $\blacksquare$

\subsection{Cosmic no-hair theorem}
\label{sec:hair}

We are almost ready to state the no-hair theorem. The remaining
ingredient we need is the cluster decomposition of $\Omega$ 
correlation functions of local observables.
Cluster decomposition does \emph{not} hold for the unsmeared graviton 
correlation function $\CO{h_{\mu\nu}(x) h^{\bmu\bnu}(\bx)}$.
Recall that we may write
\eq{ \label{eq:idunno}
  \CO{ h_{\mu\nu}(x) h^{\bmu\bnu}(\bx) }
  = H_0(Z) \T2{\mu}{\nu}{\bmu}{\bnu} + \grad + \met .
}
At large $|Z| \gg 1$ the function $H_0(Z)$ has the asymptotic
expansion
\eq{ \label{eq:HLargeZ}
  H_0(Z) = c_1 \ln (1+Z) + c_2 + \cO(Z^{-2})  , \quad
  |Z| \gg 1 ,
}
where $c_1,\;c_2$ are coefficients whose value is unimportant now.
The $c_2$ term is pure $\grad+\met$ (recall (\ref{eq:T2gauge})).
Thus for large $|Z| \gg 1$ we may write (\ref{eq:idunno}) as
\eq{ \label{eq:hhLargeZ}
  \CO{ h_{\mu\nu}(x) h^{\bmu\bnu}(\bx)} 
  = \left[ - c_1 \ln(1+Z) + \cO(Z^{-2}) \right] 
  \GG{\mu}{\nu}{\bmu}{\bnu} + \grad + \met ,
}
where $g_{\mu\bnu}$ is the parallel propagator (\ref{eq:pp}).
which is $\cO(Z^0)$ at large $|Z| \gg 1$. Therefore for $|Z| \gg 1$ the
2-pt function behaves like $\CO{ h_{\mu\nu}(x) h^{\bmu\bnu}(\bx)} \sim \ln Z$
as expected for a massless spin-2 field.

However, the $\Omega$ correlation functions of local observables
$A \in\cA_c(dS_D)$ \emph{do} enjoy a de Sitter version of cluster decomposition 
associated with large timelike and achronal separations \cite{Higuchi:2012vy}.
Consider the 2-pt function $\CO{h(f) h(p)}$ for two observables
that are well-separated in the sense that $|Z_{\rm min}| \gg 1$
where
\eq{ 
  Z_{\rm min} = {\rm min} \{ Z(x,\bx) \; | \;
  x \in {\rm supp}\, f^{\mu\nu}(x), \;
  \bx \in {\rm supp}\, p^{\bmu\bnu}(\bx) \}  .
}
Large $|Z| \gg 1$ corresponds to large timelike or achronal
separation. From Lemma~\ref{lem:Higuchi} this 2-pt function may
be recast as linearized Weyl tensor correlation function
$\CO{ C^{(1)}(v) C^{(1)}(u)}$ with appropriate $v, u$.
We may then use the known asymptotic behavior of the linearized
Weyl tensor 2-pt function to bound this 2-pt function
\cite{Kouris:2001aa,Mora:2012kr,Mora:2012zh}:\footnote{
  The most explicit expression available for the linearized Weyl 2-pt function 
  in the form (\ref{eq:WeylBound}) are given in the Corrigendum of 
  \cite{Kouris:2001aa} for $D=4$ dimensions.
  For other spacetime dimensions on may obtain (\ref{eq:WeylBound})
  from eq.~(94) of \cite{Mora:2012kr}.}
\begin{proposition}\label{prop:WeylBound}
  The linearized Weyl tensor 2-pt function of the state $\Omega$
  may be written
  \eq{ \label{eq:WeylBound}
    \CO{C^{(1)\a}_{\ph{(1)\a}\b\gamma\delta}(x)
      C^{(1)\ba}_{\ph{(1)\ba}\bb\bg\bd}(\bx)}
    = \sum_{i} F_i(Z) 
    T^{(i)\,\a\ph{\b\gamma\delta}\ba}_{\ph{(i)\,\a}\b\gamma\delta\ph{\ba}\bb\bg\bd}(Z) ,
  }
  with $F_i(Z)$ that satisfy
  \eq{
    F_i(Z) < \frac{c}{|Z|^2} , \quad {\rm for} \; |Z| \gg 1 , 
  }
  with $c$ a finite constant and tensors $T^{(i)}(Z)$ composed of 
  the parallel propagator and 
  normal vectors defined in (\ref{eq:pp}) and (\ref{eq:n}).
\end{proposition}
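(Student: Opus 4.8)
The plan is to obtain $\CO{C^{(1)}(x)C^{(1)}(\bx)}$ by applying the second-order Weyl operator of (\ref{eq:Weyl}) at each of the two points to the graviton 2-pt function of $\Omega$ from \S\ref{sec:TT}, to read off the form (\ref{eq:WeylBound}), and then to extract the large-$|Z|$ falloff from the asymptotics (\ref{eq:HLargeZ}) of $H_0(Z)$.

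First I would reduce the problem. Because $C^{(1)\a\b}_{\ph{(1)\a\b}\gamma\delta}$ annihilates both pure-gauge terms $\Lie_\xi g_{\mu\nu}$ and pure-trace terms $\omega g_{\mu\nu}$ by (\ref{eq:Weylgrad})--(\ref{eq:Weylmet}), the $\grad$ and $\met$ parts of the graviton 2-pt function drop out; and integrating (\ref{eq:Weyl}) by parts, the same two identities show that the formal adjoint of the Weyl operator carries an arbitrary $v\in C^\infty_0$ to a \emph{transverse, traceless} tensor, so every smeared Weyl observable is $C^{(1)}(v)=h(f_v)$ with $f_v\in\testTT$. Using the smeared form (\ref{eq:TTsmearing}) and undoing the integrations by parts then gives, as an identity of maximally-symmetric bi-tensors,
\begin{equation}
  \CO{C^{(1)\a}_{\ph{(1)\a}\b\gamma\delta}(x)\,
      C^{(1)\ba}_{\ph{(1)\ba}\bb\bg\bd}(\bx)}
  = 2\,\mathcal{C}_x\,\mathcal{C}_{\bx}\!
    \left[\,H_0(Z-i\epsilon s)\,\T2{\mu}{\nu}{\bmu}{\bnu}\,\right],
\end{equation}
where $\mathcal{C}_x$ denotes the operator $\Upsilon\cdot(\nabla\nabla+\delta)$ of (\ref{eq:Weyl}), including its constant prefactor, acting in the tangent space of $x$, and $\mathcal{C}_{\bx}$ the analogous operator at $\bx$; the two commute. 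The right-hand side automatically carries the algebraic symmetries of the Weyl tensor on each factor, so expanding it in the basis of such bi-tensors built from the \emph{bounded} objects $g_{\mu\bnu}(Z)$ and $n_\mu(Z)$ of (\ref{eq:pp})--(\ref{eq:n}) puts it in the stated form (\ref{eq:WeylBound}). What remains is to bound the $F_i(Z)$.

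Next I would compute in the MSBT formalism of \S\ref{sec:MSBTs}, using repeatedly $\nabla_\mu\nabla_\bnu Z=(\nabla_\mu X^A)(\nabla_\bnu\bX_A)$, the conformal Killing identity $\nabla_\mu\nabla_\nu X^A=-X^A g_{\mu\nu}$, and $X\cdot X=\bX\cdot\bX=1$, $X\cdot\bX=Z$ (with $\ell=1$). The crucial input is the unsmeared refinement of (\ref{eq:T2gauge}): for fixed $\bx$ the seed tensor is pure $\grad+\met$ as a field at $x$,
\begin{equation}
  \T2{\mu}{\nu}{\bmu}{\bnu}
  = \nabla_\mu\nabla_\nu\,\Phi^{\bmu\bnu}(x,\bx) + g_{\mu\nu}\,\Psi^{\bmu\bnu}(x,\bx),
  \qquad
  \Phi^{\bmu\bnu}=\tfrac12\,X^A X^B(\nabla^{(\bmu}\bX_A)(\nabla^{\bnu)}\bX_B),
\end{equation}
with $\Psi^{\bmu\bnu}=2\Phi^{\bmu\bnu}$, and symmetrically in $\bx$. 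Feeding this into $\mathcal{C}_x$ and integrating by parts on the scalar $\Phi$, one checks that the tensor multiplying the one surviving factor $H_0'(Z)$ is itself pure $\grad+\met$ at $x$, so one more integration by parts trades that $H_0'$ for an $H_0''$; iterating and using that $\mathcal{C}_x$ kills $\Lie_\xi g$ and $\omega g$, the net effect is that $\mathcal{C}_x\mathcal{C}_{\bx}[H_0\,\T2{\mu}{\nu}{\bmu}{\bnu}]$ depends on $H_0$ only through $H_0^{(k)}(Z)$ with $k\ge2$. In particular the non-decaying pieces $c_1\ln(1+Z)$ and $c_2$ of (\ref{eq:HLargeZ}) never appear, and since $H_0^{(k)}(Z)=\cO(Z^{-k})$ for $k\ge2$, while the accompanying MSBT prefactors grow at most polynomially in $Z$ in a manner offset by the powers of $(1+Z)^{-1}$ those derivatives bring along, every coefficient $F_i(Z)$ in the bounded basis is $\cO(Z^{-2})$, i.e. $F_i(Z)<c/|Z|^2$ for $|Z|\gg1$.

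The step I expect to be the main obstacle is the bookkeeping that turns ``depends only on $H_0^{(k)}$, $k\ge2$'' into the sharp $\cO(Z^{-2})$ estimate: re-expressing a factor $\nabla_\mu Z$ in terms of the unit normal $n_\mu(Z)$ introduces a factor $\sim|Z|$, and one must verify that every such growth is compensated by the $(1+Z)^{-1}$ suppression carried by the $H_0$-derivatives, so that no $F_i$ decays more slowly than $|Z|^{-2}$. This is a finite but somewhat delicate computation; as an independent check I would compare the result with the explicit closed forms already available --- the Corrigendum of \cite{Kouris:2001aa} for $D=4$ and eq.~(94) of \cite{Mora:2012kr} for general $D$ --- which necessarily agree with the expression above, since (as shown in \S\ref{sec:compare} and \cite{Mora:2012zh}) the less-symmetric state MTW has the same linearized Weyl 2-pt function as $\Omega$, and which already display the $c/|Z|^2$ falloff.
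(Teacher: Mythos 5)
You have identified the right objects, and your reductions are sound: the Weyl operator annihilates the $\grad$ and $\met$ pieces of $\CO{h_{\mu\nu}(x)h^{\bmu\bnu}(\bx)}$, so the Weyl 2-pt function is obtained by acting with the operator of (\ref{eq:Weyl}) at each point on $H_0(Z)\,\T2{\mu}{\nu}{\bmu}{\bnu}$, and the result necessarily expands in the bounded basis built from $g_{\mu\bnu}(Z)$ and $n_\mu(Z)$, which is the form (\ref{eq:WeylBound}). But the decisive step — that every coefficient $F_i(Z)$ falls like $|Z|^{-2}$ — is not established by your argument. Naive power counting goes the other way: each conversion $\nabla_\mu Z=-(1-Z^2)^{1/2}n_\mu$ costs a factor $\sim|Z|$, and the four derivatives of the two Weyl operators acting on $H_0$ produce terms like $H_0^{(4)}(Z)$ times four factors of $\nabla Z$, i.e.\ $\cO(Z^{-4})\times\cO(Z^{4})=\cO(1)$, with subleading structures naively as large as $\cO(Z^{-1})$. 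So the bound $F_i(Z)<c/|Z|^2$ rests entirely on cancellations among these pieces, and the two places where you assert them ("one checks that the tensor multiplying $H_0'$ is pure $\grad+\met$", "the growth is offset by the $(1+Z)^{-1}$ suppression") are precisely the content of the proposition; "depends only on $H_0^{(k)}$, $k\ge2$" would not even suffice by itself, since such terms can still be multiplied by unboundedly growing prefactors before cancellation. As written, the proposal therefore has a genuine gap at its key estimate, which you yourself flag.

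It is worth knowing that the paper does not prove this proposition by such a computation either: it quotes the explicit closed-form Weyl--Weyl correlators from the literature (the Corrigendum of \cite{Kouris:2001aa} for $D=4$ and eq.~(94) of \cite{Mora:2012kr} for general $D$, together with \cite{Mora:2012zh} for the equivalence of the states), from which the $|Z|^{-2}$ falloff of the $F_i$ is simply read off. Your closing fallback — comparing with exactly those references — is thus not an "independent check" but is, in effect, the paper's actual proof; the self-contained derivation you sketch would require carrying out the full MSBT computation (or reproducing the cited one) to exhibit the cancellations, and until that is done the $c/|Z|^2$ bound is assumed rather than proved.
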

The parallel propagator and normal vectors are $\cO(Z^0)$ for 
$|Z| \gg 1$ and so it follows that we may bound
\eq{ \label{eq:clustering1}
  \CO{h(f) h(p)} < \frac{c}{|Z_{\rm min}|^2}, \quad
  |Z_{\rm min}| \gg 1 , \quad
  h(f), \; h(p) \in \cA_c(dS_D) ,
}
where $c$ is a finite constant that depends on the test functions
$f,p$ but not on $Z_{\rm min}$.

The cosmic no-hair theorem follows immediately from (\ref{eq:clustering1}):
\begin{theorem}\label{thm:hair}
  Let $\Psi \in \cA_c(\cO)\Omega $, $\cO$ a contractible subset of $dS_D$,
  and $A(f) \in \cA_c(dS_D)$ with $f$ schematically denoting the 
  test function. Then
  \eq{ \label{eq:hair}
    \left| \C{ A(f) }_\Psi - \CO{ A(f) } \right|
    < \frac{c}{|Z_{\rm min}|^2},
    \quad {\rm for}\; |Z_{\rm min}| \gg 1  ,
  }
  where
  \eq{\label{eq:Zmin}
    Z_{\rm min} = {\rm min} \{Z(x,\bx) \; | \;
    x \in {\rm supp}\, f(x), \; \bx \in \cO \} .
  }
  Here $c$ is a finite constant that depends on the test function
  $f$ and state $\Psi$ but does not depend on $Z_{\rm min}$.
\end{theorem}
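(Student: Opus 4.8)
The plan is to represent $\Psi$ as a vector generated from $\Omega$, expand the resulting $\Omega$-correlator using the quasi-freeness of $\Omega$, and isolate the contribution that connects $\supp f$ to $\cO$; this last piece is precisely what the clustering estimate (\ref{eq:clustering1}) controls.

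First I would use the hypothesis $\Psi \in \cA_c(\cO)\Omega$ to write $\ket{\Psi} = B\ket{\Omega}$ with $B \in \cA_c(\cO)$, normalised so that $\bra{\Omega}B^\dagger B\ket{\Omega} = 1$ (positivity of $\Omega$ makes this quantity finite, and strictly positive unless $\Psi=0$). Then
\eq{
  \C{A(f)}_\Psi - \CO{A(f)}
  = \bra{\Omega} B^\dagger A(f) B \ket{\Omega}
  - \bra{\Omega}A(f)\ket{\Omega}\, \bra{\Omega}B^\dagger B\ket{\Omega} ,
}
and the task is to bound the right-hand side by $c\,|Z_{\min}|^{-2}$. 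Next I would reduce to observables built from the linearised Weyl tensor: by Lemma~\ref{lem:Higuchi} every metric observable with contractible support equals a linearised Weyl observable $C^{(1)}(v)$ whose support may be taken arbitrarily close to the original, so, applying this to $A(f)$ and to $B$ and using linearity of the algebra, it suffices to treat the case where $A(f)$ and $B$ are \emph{monomials} in smeared linearised Weyl tensors (together with their gauge-invariant Wick powers) supported in $\supp f$ and $\cO$ respectively; then $B^\dagger A(f) B$ is a monomial with $m$ Weyl insertions at points in $\cO$ (from $B^\dagger$), $k$ at points in $\supp f$ (from $A(f)$), and $m$ more at points in $\cO$ (from $B$). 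Wick powers are handled by point-splitting: the Hadamard subtractions are $c$-number distributions supported on the diagonal and so are irrelevant at the large separation $|Z_{\min}|$.

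Since $\Omega$ is quasi-free, $\bra{\Omega}B^\dagger A(f) B\ket{\Omega}$ is a finite sum, over pairings of the $2m+k$ insertion points, of products of two-point functions $\CO{C^{(1)}(\cdot)\,C^{(1)}(\cdot)}$. I would split the pairings into two classes. In the first, every $\supp f$-point is paired with a $\supp f$-point, hence every $\cO$-point with a $\cO$-point; summing over this class factorises exactly into $\bra{\Omega}A(f)\ket{\Omega}\cdot\bra{\Omega}B^\dagger B\ket{\Omega}$, which cancels the second term of the displayed identity. (If $k$ is odd this class is empty, but then $2m+k$ is odd and both terms vanish separately.) Every pairing in the remaining class contains at least one factor $\CO{C^{(1)}(\cdot)\,C^{(1)}(\cdot)}$ with one argument supported in $\supp f$ and the other in $\cO$; by Proposition~\ref{prop:WeylBound}, together with the boundedness of the parallel propagator and normal vectors appearing in the tensors $T^{(i)}(Z)$ — exactly the estimate behind (\ref{eq:clustering1}) — such a factor is bounded in absolute value by a constant times $|Z_{\min}|^{-2}$. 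The remaining factors in each term are fixed finite numbers independent of $Z_{\min}$, and there are finitely many terms, so the whole remainder is $<c\,|Z_{\min}|^{-2}$ with $c$ depending only on $f$ and $\Psi$; dividing by the (here normalised) factor $\bra{\Omega}B^\dagger B\ket{\Omega}$ finishes the proof. Note that via Theorem~\ref{thm:RS} the set $\cA_c(\cO)\Omega$ is dense in $\cD_c$, so although the bound is established for $\Psi$ in this set it in fact governs the late-time behaviour in an arbitrary state of $\cD_c$.

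I expect the main obstacle to be the treatment of composite (Wick-power) observables: one must check that Proposition~\ref{prop:WeylBound} survives the covariant differentiations that occur when point-splitting such observables, i.e.\ that the derivatives of the linearised Weyl two-point function retain the $O(|Z|^{-2})$ fall-off, so that the clustering bound is not degraded. Once that is in hand, the combinatorial identification of the disconnected pairings with $\bra{\Omega}A(f)\ket{\Omega}\bra{\Omega}B^\dagger B\ket{\Omega}$ and the counting of the remaining terms are routine. Finally, to recover the proper-time form $c\,e^{-2\tau}$ quoted in \S\ref{sec:intro}, one uses that along a timelike geodesic $Z \sim \tfrac12 e^{\tau}$ at large proper-time separation $\tau$, whence $|Z_{\min}|^{-2} \sim e^{-2\tau}$.
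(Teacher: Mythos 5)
Your proposal is correct and follows essentially the same route as the paper: write $\ket{\Psi}=B\ket{\Omega}$ with $B\in\cA_c(\cO)$, expand $\bra{\Omega}B^\dagger A(f)B\ket{\Omega}$ using quasi-freeness, and observe that the disconnected pairings reproduce $\CO{A(f)}\bra{\Omega}B^\dagger B\ket{\Omega}$ while every remaining pairing contains a cross two-point function bounded by the clustering estimate (\ref{eq:clustering1}) derived from Lemma~\ref{lem:Higuchi} and Proposition~\ref{prop:WeylBound}. Your version simply spells out the pairing combinatorics and the Weyl-observable reduction that the paper leaves implicit.
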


\begin{proof}
Let $\ket{\Psi} = B \ket{\Omega}$, $B \in \cA_c(\cO)$. 
Like all states $\ket{\Psi}$ is normalized, so
$\bra{\Omega}B^*B \ket{\Omega} = 1$.
The correlation function 
$\C{A(f) }_\Psi= \bra{\Omega}B^* A(f) B \ket{\Omega}$
is constructed out of smeared products of the graviton 2-pt function.
From (\ref{eq:clustering1}) it follows that any such 2-pt
function connecting $A(f)$ and $B$ is $\cO(|Z_{\rm min}|^2)$. Thus
\eqn{
  \C{A(f) }_\Psi
  &=& \bra{\Omega}B^* A(f) B \ket{\Omega}  \nn \\
  &=& \bra{\Omega}B^*B \ket{\Omega} 
  \bra{\Omega}  A(f) \ket{\Omega} + \cO( |Z_{\rm min}|^{-2} ) \nn \\
  &=& \CO{ A(f) } + \cO( |Z_{\rm min}|^{-2} ) ,
}
which verifies (\ref{eq:hair}). $\blacksquare$
\end{proof}

The implications of this theorem are plain to see. Consider an
arbitrary state $\ket{\Psi} \in \cD_c$. Using the Reeh-Schlieder property
we may write this state as $\ket{\Psi} = B \ket{\Omega}$, $B \in \cA_c(\cO)$
with $\cO$ a contractible region centered around the point 
$(\eta_i,\vec{0})$ in the Poincar\'e chart (\ref{eq:gPoincare}).
Here $\eta_i$ is a finite but arbitrarily early time.
Now consider any observable $A \in \cA_c(dS_D)$ whose support is
centered around $(\eta_f,\vec{0})$ with $\eta_f \gg \eta_i$.
For sufficiently large $\eta_f-\eta_i$ the embedding distance $Z(x_i,x_f)$
between any point $x_f \in {\rm supp}(A)$ and any point $x_i \in \cO$ is 
large and Theorem~\ref{thm:hair} states that
\eq{\label{eq:nohair}
  |\C{A(\tau)}_\Psi - \CO{A(\tau)}| < c\, e^{-2\tau}, 
}
where $\tau$ is the proper time separation between $(\eta_i,\vec{0})$
and $(\eta_f,\vec{0})$. We have used the fact that $|Z| \sim e^{\tau}$ 
at large timelike separations. Obviously (\ref{eq:nohair}) remains
true if $A$ is displaced from the origin of the spatial slice so
long as it remains well within the causal future of $\cO$.

\section{de Donder 2-pt functions}
\label{sec:deDonder}

In this section we change gears and construct the graviton 2-pt
function of the state $\Omega$ in the class of generalized de Donder
(dD) gauges defined by the gauge condition (\ref{eq:dD}).
Our interest in these expressions stems from the fact that,
unlike the TT gauge condition used above, the dD gauge conditions 
may be imposed in non-linear perturbation theory, so the 2-pt
functions we construct here can be used as Green's functions
in perturbation theory.
We also take this opportunity to ease the tension in the 
literature regarding the existence of manifestly maximally-symmetric
2-pt functions for this useful class of gauges. 
As in the previous sections we use Lorentz-signature techniques.

In order to construct an ansatz for the 2-pt function let us first
note that any TT tensor trivially satisfies the dD gauge 
condition (\ref{eq:dD}) for all $\beta$. Therefore our ansatz for the 
dD 2-pt function contains the TT part obtained in \S\ref{sec:TT} as well 
as a second term:
\eq{ \label{eq:dDansatz1}
  \CO{ h_{\mu\nu}(x)h^{\bmu\bnu}(\bx)} = 
  \Delta^{{\rm TT}\bmu\bnu}_{\mu\nu}(Z-i\epsilon s ) 
  + \Delta^{{\rm dD}\bmu\bnu}_{\mu\nu}(Z-i\epsilon s) .
}
From the general form of an MSBT (\ref{eq:sly}) it follows that the 
dD term may be written
\eq{\label{eq:Cdef}
  \Delta^{{\rm dD}\bmu\bnu}_{\mu\nu}(Z) = C(Z) g_{\mu\nu}g^{\bmu\bnu}
  + \grad,
}
where the $\grad$ part is determined from $C(Z)$ by imposing the
gauge condition (\ref{eq:dD}). This is easily accomplished by using
the dD projection operator \cite{Mora:2012zi}
\eq{ \label{eq:b}
  \P^{(\beta)}_{\mu\nu} \phi(x) 
  := \left( \nabla_\mu \nabla_\nu + a_\b g_{\mu\nu} 
    + b_\b g_{\mu\nu} \Box \right) 
  \phi(x) 
  = \left( \nabla_\mu \nabla_\nu + b_\b g_{\mu\nu}(\Box - M^2_\beta) \right)
  \phi(x) ,
}
with
\eq{
  a_\beta = \frac{2(1-D)}{2-\beta D} = (1+D b_\b), \quad
  b_\beta = \frac{\beta-2}{2-\beta D} , \quad
  M_\beta^2 = - \frac{a_\b}{b_\b} = \frac{2(D-1)}{\beta-2} .
}
For any scalar field $\phi(x)$ the tensor $\P^{(\beta)}_{\mu\nu}\phi(x)$
satisfies (\ref{eq:dD}). Therefore our ansatz for 
$\Delta^{{\rm dD}\bmu\bnu}_{\mu\nu}(Z)$ is
\eq{ \label{eq:dDansatz2}
  \Delta^{{\rm dD}\bmu\bnu}_{\mu\nu}(Z) := \P_{\mu\nu} \P^{\bmu\bnu} B(Z) .
}

A rather awkward feature of the dD gauges is that in general the
linearized equations of motion do not preserve the gauge condition.
Recall that $L^{(1)}_{\mu\nu}(q)$ is transverse for any $q_{\mu\nu}(x)$, 
so the action of $L^{(1)}_{\mu\nu}$ 
preserves the gauge condition only for transverse gauge $\beta=0$.
Noting that
\eq{
  L^{(1)}_{\mu\nu}(\grad) = 0, \quad
  L^{(1)}_{\mu\nu}(g \phi) 
  = \frac{D-2}{2} \P^{(0)}_{\mu\nu} \phi(x) ,
}
where 
\eq{ \label{eq:Ptrans}
  \P_{\mu\nu}^{(0)} \phi(x)
  = \left(\nabla_\mu\nabla_\nu - g_{\mu\nu}(\Box+D-1)\right) \phi(x) 
}
constructs a transverse tensor from a scalar, we obtain
\eq{
  L_{\mu\nu}^{(1)}(\P^{(\beta)} \phi)
  = \frac{D-2}{2}b_\beta \P^{(0)}_{\mu\nu} 
  \left(\Box - M^2_\beta \right)\phi(x) .
}
So, in admittedly awkward notation, the action of the linearized
equations of motion on our ansatz is
\eq{ \label{eq:dDEOM1}
  L_{\mu\nu}^{(1)}(\D^{{\rm dD}\bmu\bnu})
  = \frac{D-2}{2}b_\beta \P^{(0)}_{\mu\nu} \P^{(\beta)\bmu\bnu} 
   \left[  \left(\Box - M^2_\beta \right) B(Z)  \right] .
}

We now proceed to obtain the equation of motion for $B(Z-i\epsilon s)$.
The graviton 2-pt function satisfies the equation of motion
\eq{ \label{eq:TEOM}
  \int_x \int_\bx f^{\mu\nu}(x) p_{\bmu\bnu}(\bx)
  \CO{ L^{(1)}_{\mu\nu}(h) h^{\bmu\bnu}(\bx) } = 0 ,
  \quad  p \in \testT , 
  \quad f \in C^\infty_0(T^2_0(dS_D)) .
}
The most general such $f^{\mu\nu}(x)$
may be uniquely decomposed into the parts \cite{Stewart:1990aa}
\eq{
  f_{\mu\nu}(x) = f^{\rm TT}_{\mu\nu}(x) + \nabla_{(\mu} \xi_{\nu)}(x)
  + g_{\mu\nu} f_1(x) + \nabla_\mu\nabla_\nu f_2(x) ,
}
where $f^{\rm TT}_{\mu\nu}(x)$ and transverse traceless and $\xi_\nu(x)$
is transverse. Due to the self-adjointness of $L^{(1)}_{\mu\nu}$ neither 
the $\xi^\mu(x)$ nor $f_2(x)$ term contribute to (\ref{eq:TEOM}). 
Similarly, using (\ref{eq:dDEOM1}), the self-adjointness of $\P^{(0)}_{\mu\nu}$, 
and the fact that $\P^{(0)\mu\nu} f^{\rm TT}_{\mu\nu}(x) = 0$ it follows that 
$f^{\rm TT}_{\mu\nu}(x)$ does not contribute to (\ref{eq:TEOM}). Therefore
(\ref{eq:TEOM}) is non-trivial only for test functions of the form
$f^{\mu\nu}(x) = f_1(x) g^{\mu\nu}$.
Noting that
\eq{
  g^{\mu\nu} \P^{(0)}_{\mu\nu} \phi(x) = (1-D)(\Box+D) \phi(x),
}
we obtain the equation
\eq{ \label{eq:dDEOM2}
  \frac{(1-D)(D-2)}{2} b_\beta   \int_\bx p_{\bmu\bnu}(\bx) \P^{(\beta)\bmu\bnu}
  \left[ (\Box+D)(\Box- M^2_\beta) B(Z-i\epsilon s) \right]  = 0 .
}
It is important to note that the $\P^{(\beta)}_{\mu\nu} Z = 0$ so
the term in brackets is defined only up to a term linear in $Z$.
Finally, noting that
\eq{
  \P_{\bmu\bnu}^{(\beta)} = b_\beta g_{\bmu\bnu}(\bBox-M^2_\beta) + \grad,
}
(\ref{eq:dDEOM2}) implies the following equation of
motion for $B(Z-i\epsilon s)$:
\eq{ \label{eq:BEOM}
  \frac{(1-D)(D-2)}{2} \frac{(\beta-2)^2}{(2-\beta D)^2}
  (\Box+D) \left(\Box - M^2_\beta \right)\left(\bBox - M^2_\beta \right) 
  B(Z-i\epsilon s) = c Z  ,
}
where $c$ is an arbitrary constant. We have inserted the value of $b_\b$
from (\ref{eq:b}). To this point our analysis agrees with that of
\cite{Mora:2012zi}, except that these authors only considered the
case of $c=0$.

There exists a maximally-symmetric solution to (\ref{eq:BEOM}) with 
$c = - (D+1)/\vol$. We may construct this solution
using the same K\"allen-Lehmann technique used in \S\ref{sec:TT}:
defining $\s_\b$ via $M^2_\b = -\s_\b(\s_\b+D-1)$, the solution
is simply
\eqn{ \label{eq:B}
  B(Z) &=& \frac{2}{(1-D)(D-2)}\frac{(2-\beta D)^2}{(\beta-2)^2}
  \nn \\ & &
  \int_{C_B} \frac{d\omega}{2\pi i} 
  \frac{(2\omega+D-1)}{(\omega-1)(\omega+D)(\omega-\s_\b)^2
  (\omega+\s_\b+D-1)^2} \D_\omega(Z), \quad
}
with integration contour $C_B$ depicted in Fig.~\ref{fig:gravitonContours}, 
i.e., it is traversed from $-i\infty$ to $+i\infty$ mostly in the left 
half-plane but is deformed so as to keep the poles at $\omega=0,2,3,4,\dots$ 
to the right and the poles at $\omega = 1, -D, \s_\b, -(\s_\b+D-1)$ to 
the left. 
The contour $C_B$ exists so long as the
left and right poles do not overlap, i.e. so long as
\eq{ \label{eq:badBeta}
  M_\beta^2 = \frac{2(D-1)}{\beta-2} = -\s_\b(\s_\b+D-1) \neq - n(n+D-1), \quad
  {\rm for}\; n = 0, 2,3,4,\dots.
}
It is easy to verify that (\ref{eq:B}) solves (\ref{eq:BEOM})
with $c = -(D+1)/\vol$.
To summarize, excepting the values of $\beta$ listed in (\ref{eq:badBeta}), 
the graviton 2-pt function for the state $\Omega$ in dD gauge 
is manifestly maximally-symmetric and is given by (\ref{eq:dDansatz1}) 
with
$\Delta^{{\rm dD}\bmu\bnu}_{\mu\nu}(Z-i\epsilon s)$ as in (\ref{eq:dDansatz2}) and
$B(Z-i\epsilon s)$ as in (\ref{eq:B}).
Just as in TT gauge, the time-ordered 2-pt function 
$\CO{T h_{\mu\nu}(x) h^{\bmu\bnu}(\bx)}$
is provided from the Wightman-ordered 2-pt function simply by changing the 
$i\epsilon$ prescription $i\epsilon s \to i \epsilon$.

It is important to note that $\Delta_{\mu\nu}^{{\rm dD}\bmu\bnu}(Z-i\epsilon s)$ is
``pure gauge'' and as a result the addition of
$\Delta_{\mu\nu}^{{\rm dD}\bmu\bnu}(Z-i\epsilon s)$ to the 2-pt function
does not alter the positive or Hadamard attributes of $\Omega$.
Recall that because 
$\Delta_{\mu\nu}^{{\rm dD}\bmu\bnu}(Z-i\epsilon s)$ 
solves the linearized equations of motion we may use these equations 
to restrict the class of test functions used to construct $h(f)$ correlators
to $\testTT$; however, $\Delta_{\mu\nu}^{{\rm dD}\bmu\bnu}(Z-i\epsilon s)$ 
vanishes against any $f \in \testTT$ because it is composed solely
of gradient and metric terms. Likewise
$\Delta_{\mu\nu}^{{\rm dD}\bmu\bnu}(Z-i\epsilon s)$ does not contribute 
to correlators of the linearized Weyl tensor.

\subsection{Relation to previous works}
\label{sec:compare2}

We conclude by once again contrasting our results with the conflicting
claims in the literature. We begin by comparing our dD 2-pt functions
with those which may be inferred, for $D=4$, from the ``covariant 
gauge'' graviton 2-pt function of 
\cite{Higuchi:2001uv,Higuchi:2000ge,Faizal:2011iv}. This procedure
yields precisely the same dD 2-pt functions as our own. 
In this case as well one finds that maximally-symmetric 2-pt functions
exist for all but the discrete set of values for $\beta$ (\ref{eq:badBeta}).

On the other hand, our results are in conflict with the claims
of \cite{Miao:2011fc,Mora:2012zi}.
These works contend that there does not exist a 
maximally-symmetric 2-pt function for gauge parameter $\beta < 2$, including 
transverse gauge $\beta = 0$.
The source of this tension is essentially the same as that described in
\S\ref{sec:compare}: the authors of works impose the
additional requirement that $B(Z-i\epsilon s)$ admit a fourier transform in
Poincar\'e coordinates which is convergent about $\vec{k} = 0$.
This requirement can only be satisfied if one lets $c=0$ in (\ref{eq:BEOM}). 
In contrast, we have shown that the maximally-symmetric solution 
to (\ref{eq:BEOM}) corresponds to $c = - (D+1)/\vol$.
As result of the authors' preferences, \cite{Miao:2011fc,Mora:2012zi} 
were forced to consider solutions (\ref{eq:BEOM}) which are less symmetric. \\

{\bf Acknowledgements:}\\

We thank Atsushi Higuchi, Stefan Hollands, David Hunt, and Chris Fewster 
for useful conversations. We are especially thankful to Markus Fr\"ob
and Richard Woodard for several correspondences.
IM is supported by the Simons Foundation Postdoctoral
Fellowship.

\appendix

\section{The Euclidean TT 2-pt function}
\label{app:Euclidean}

In this appendix we compute the graviton 2-pt function in TT gauge
on the Euclidean sphere $S^D$ using standard harmonic analysis.
We then show that the analytic continuation of this 2-pt function
to Lorentz signature agrees with the 2-pt functions obtained in
\S\ref{sec:TT}. Our Euclidean analysis is very similar to that of 
\cite{Allen:1986tt,Floratos:1987aa,Higuchi:2001uv,Higuchi:2000ge,Park:2008ki}.
We set the radius of $S^D$ to unity.

%
%

We denote scalar harmonics on $S^D$ by $Y_\vL(x)$; these harmonics are
eigenfunctions of the Laplace operator
\eq{ \label{eq:scalarE}
  \Box Y_\vL(x) = - L(L+D-1) Y_\vL(x) ,
}
and are labelled by their angular momenta 
$\vL = (L:=L_{D},L_{D-1},\dots,L_1)$ which satisfy
$L_{D} \ge L_{D-1} \ge \dots \ge L_2 \ge |L_1| \ge 0$.
They form an orthonormal and complete set with respect to the scalar
$L^2$ inner product:
\eqn{
  \int_x Y_\vL(x) Y_\vM^*(x) &=& \delta_{\vL\vM}, \quad
  \sum_\vL Y_\vL(x) Y_\vL^*(\bx) = \frac{\delta^D(x,\bx)}{\sqrt{g}}  .
}
In this Appendix $\int_x F(x) = \int d^Dx \sqrt{\gamma(x)} F(x)$ where
$\gamma_{\mu\nu}$ is the metric on $S^D$.
Similarly, we denote symmetric, transverse, traceless rank-2 tensor harmonics
by $T^{(\vL;\a)}_{\mu\nu}(x)$; these are also eigenfunctions of the
Laplace operator
\eq{ \label{eq:tensorE}
  \Box T^{(\vL;\a)}_{\mu\nu}(x) 
  = \left[-L(L+D-1)+2\right] T^{(\vL;\a)}_{\mu\nu}(x) ,
}
and are labelled by their
angular momenta $\vL$ which satisfy
$L_{D} \ge L_{D-1} \ge \dots \ge L_2 \ge |L_1| \ge 2$, as well as a 
polarization index $\alpha = 1, \dots, (D-1)(D-2)/2$.
They form an orthonormal and complete set for TT tensors with respect 
to the tensor $L^2$ inner product:
\eqn{
  \int_x 
  T^{(\vL;\a)}_{\mu\nu}(x) (T^{(\vM;\beta)\mu\nu}(x) )^*
  &=& \delta^{\vL\vM}\delta^{\a\b}, \quad
  \sum_\vL \sum_\a
  T^{(\vL;\a)}_{\mu\nu}(x) (T^{(\vL;\a)\bmu\bnu}(\bx) )^*
  =: \delta^{{\rm TT}\bmu\bnu}_{\mu\nu}(x,\bx) . \nn \\
}
The second expression defines $\delta^{{\rm TT}\bmu\bnu}_{\mu\nu}(x,\bx)$
on $S^D$.
Explicit forms of the harmonics conforming to our conventions
may be found in, e.g. \cite{Higuchi:1986wu,Marolf:2008hg}.

%
%
It is also useful to define the following maximally-symmetric
bi-tensor harmonics. For scalars these are defined by 
\eq{
  W_L(Z) := \sum_\vj Y_\vL(x) Y_\vL^*(\bx), \quad \vL = (L,\vj) ,
}
and turn out to be just a polynomial in the Euclidean embedding
distance $Z := Z(x,\bx) = \cos \theta(x,\bx)$ where $\theta(x,\bx)$
is the angular separation between points on $S^D$.
Explicitly \cite{Drummond:1975yc},
\eq{ \label{eq:scalarW}
  W_L(Z) = \frac{(2L+D-1)}{4 \pi^{(D+1)/2}} 
  \frac{\Gamma\left(\frac{D-1}{2}\right)\Gamma(L+D-1)}
  {\Gamma(D-1)\Gamma(L+1)}
  \2F1{-L}{L+D-1}{\frac{D}{2}}{\frac{1-Z}{2}} .
}
Since $L$ is an integer the hypergeometric function in this
expression reduces to a polynomial of order $L$.
Likewise for symmetric TT tensors we define
\eq{ 
  W^{L\;\;\bmu\bnu}_{\mu\nu}(Z) :=
  \sum_{\vj} \sum_{\alpha} T^{(L,\vj,\a)}_{\mu\nu}(x)
  (T^{\bmu\bnu}_{(L,\vj,\a)}(\bx))^*, \quad \vL=(L,\vj) .
}
A closed form for $W^{L\;\;\bmu\bnu}_{\mu\nu}(Z)$ may be computed 
using the method of \cite{Allen:1994yb}; the result is
\eq{
  \label{eq:tensorW}
  W^{L\;\;\bmu\bnu}_{\mu\nu}(Z) = 
  \P_{\mu\nu}^{\ph{\mu\nu}\a\b} \P^{\bmu\bnu}_{\ph{\bmu\bnu}\ba\bb}
  \left[ N_L W_L(Z) \T2{\a}{\b}{\ba}{\bb}\right],
}
with
\eq{
  N_L = \frac{4(D-3)^2}{(D-2)^2}\frac{1}{L^2(L+D-1)^2(L+1)^2(L+D)^2} .
}
Clearly $W_L(Z)$ and $W^{L\;\;\bmu\bnu}_{\mu\nu}(Z)$ are bi-eigenfunctions 
of the Laplacian with eigenvalues as in (\ref{eq:scalarE}) and 
(\ref{eq:tensorE}) respectively.

%
%
In TT gauge the graviton 2-pt function on $S^D$ satisfies
\eq{
  \half (\Box -2)\C{h_{\mu\nu}(x) h^{\bmu\bnu}(\bx)}_{\rm E} =
  - \delta^{{\rm TT}\bmu\bnu}_{\mu\nu}(x,\bx) .
}
It is easy to invert $-\half(\Box-2)$ on the TT identity operator
and simplify:
\eqn{ \label{eq:hhE}
  \C{h_{\mu\nu}(x) h^{\bmu\bnu}(\bx)}_{\rm E} 
  &=& 2 \sum_{\vL} \sum_{\alpha} \frac{T^{(\vL,\a)}_{\mu\nu}(x)
  T^{*\bmu\bnu}_{(\vL,\a)}(\bx)}{L(L+D-1)}
  \nn \\
  &=& 2 \sum_{L=2}^\infty \frac{W^{L\;\;\bmu\bnu}_{\mu\nu}(Z)}{L(L+D-1)}
  \nn \\
  &=& 
  \P_{\mu\nu}^{\ph{\mu\nu}\a\b} \P^{\bmu\bnu}_{\ph{\bmu\bnu}\ba\bb}
  \left[ F(Z) \T2{\a}{\b}{\ba}{\bb}\right] ,
}
with
\eq{ \label{eq:Fsum}
  F(Z) = \frac{8(D-3)^2}{(D-2)^2} 
  \sum_{L=2}^\infty \frac{W_L(Z)}{L^3(L+D-1)^3(L+1)^2 (L+D)^2} .
}
The sum of polynomials of $Z$ in (\ref{eq:Fsum}) converges for non-coincident
points on $S^D$, i.e. for $Z \in [-1,1)$; however, it does not converge
for generic $Z$ values outside this range. Since the analytic continuation
of (\ref{eq:hhE}) amounts to extending the range of $Z$ from $Z \in [-1,1)$
to $Z \in \mathbb{C}\setminus [1,+\infty)$ we must first render $F(Z)$ 
into a more suitable form before the continuation.
This may be done by a standard technique known as a Watson-Sommerfeld
transformation \cite{Marolf:2010zp} whereby the sum is recast as a
contour integral. Consider the expression
\eq{ \label{eq:Fcontour}
  F(Z) = - \frac{8(D-3)^2}{(D-2)^2}  \int_{C_F} \frac{d\omega}{2\pi i}
  \frac{\pi}{\sin(\pi \omega)}
  \frac{W_\omega(-Z)}{\omega^3(\omega+D-1)^3(\omega+1)^2 (\omega+D)^2} ,
}
where the integration contour $C_F$ encloses in a clockwise fashion
the poles in the integrand at $\omega = 2,3,4,\dots$
due to the factor of $\sin(\pi\omega)$ in the denominator.
Using Cauchy's formula the contour integral is equivalent to the sum of
residues due to these poles, and this sum is precisely 
(\ref{eq:Fsum}). On the other hand, we may rewrite (\ref{eq:Fcontour})
by noting first that
\eq{
  \D_\omega(Z) = - \frac{\pi}{(2\omega+D-1) \sin\pi\omega} W_\omega(-Z) ,
}
and second that the integrand decays sufficiently rapidly as 
$|\omega| \to \infty$ such that the contour may be deformed from $C_F$ 
to $C_A$ -- see Fig.~\ref{fig:gravitonContours}. 
After these manipulations the contour integral for $F(Z)$
is precisely that for $A(Z)$ in (\ref{eq:Acontour}).
The analytic continuation process is completed by adding the appropriate
$i\epsilon$ prescription for avoiding the cut in $F(Z)=A(Z)$ along
$Z\in [1,+\infty)$. Thus the analytic continuation of the graviton 
2-pt function in TT gauge on $S^D$ agrees with our result constructed 
explicitly on $dS_D$ in \S\ref{sec:TT}.


\addcontentsline{toc}{section}{References}
\bibliographystyle{utphys}
\bibliography{refs}

\end{document}